\documentclass[a4paper,11pt]{article}

\usepackage[utf8]{inputenc}
\usepackage[backend=biber,style=alphabetic,maxbibnames=100,maxalphanames=4]{biblatex}
\usepackage[in]{fullpage}
\usepackage[dvipsnames,svgnames,x11names]{xcolor}
\usepackage[colorlinks]{hyperref}
\usepackage{enumerate}
\usepackage{nicefrac}
\usepackage{algorithm}
\usepackage{amsmath}
\usepackage{hyperref}
\usepackage{cleveref}
\usepackage{float}
\usepackage{subcaption}
\usepackage{graphicx}
\usepackage{amsfonts,amssymb,amsthm,bbm,mathtools}
\usepackage{mathtools}
\usepackage{authblk}
\usepackage{amssymb}
\usepackage{amsthm}
\usepackage{esvect}
\usepackage[normalem]{ulem}
\usepackage{hhline}
\usepackage{braket}
\usepackage{pgfplots}
\usepackage{multicol}
\usepackage[final]{microtype}
\usepackage[font=small]{caption}
\usepackage{diagbox}
\usepackage{thm-restate}

\usetikzlibrary{calc}
\usetikzlibrary{trees}
\usetikzlibrary{patterns}
\usepgfplotslibrary{fillbetween}

\pgfplotsset{compat=1.6}

\bibliography{bibliography}

\hypersetup{
    breaklinks = true,
	colorlinks = true,
	citecolor = teal,
	urlcolor = purple,
}

\newtheorem{theorem}{Theorem}
\newtheorem{lemma}[theorem]{Lemma}

\newcommand{\R}{\mathrm{R}}

\DeclareMathOperator{\Oh}{O}

\DeclareMathOperator{\be}{H}

\DeclareMathOperator{\poly}{poly}

\title{Quantum Time-Space Tradeoffs for\\ Exponential Dynamic Programming}
\author[$1$]{Susanna Caroppo}
\author[$2$]{Jevgēnijs Vihrovs}
\author[$2$]{\authorcr Dārta Zajakina}
\author[$2$]{Aleksejs Zajakins}

\affil[$1$]{Roma Tre University, Department of Engineering}
\affil[$2$]{University of Latvia, Faculty of Science and Technology}

\date{}

\begin{document}

\maketitle

\begin{abstract}
    We investigate the quantum algorithms for dynamic programming by Ambainis et al.~(SODA'19).
    While giving provable complexity speedups and applicable to a variety of NP-hard problems, these algorithms have a notable drawback: they require a large amount of Quantum Random Access Memory (QRAM), which potentially could be very challenging to implement in a physical quantum computer.
    In this work, we study how we can improve the space complexity by trading it for time, while still retaining a speedup over the classical algorithms.
    We show novel quantum time-space tradeoffs by combining different classical approaches with quantum techniques.
    For instance, we show that the \textsc{Travelling Salesman Problem} can be solved quantumly in $\widetilde\Oh(1.859^n)$ time and $\widetilde\Oh(1.315^n)$ QRAM space.
\end{abstract}

\section{Introduction}

In this work, we investigate quantum algorithms designed to address NP-hard problems more efficiently than the best known classical algorithms.
The most well-known example of such an algorithm is Grover's search algorithm \cite{Grover96}, which can be applied to explore all possible $2^n$ assignments of a SAT formula with $n$ variables.
The resulting $\widetilde \Oh(\sqrt{2^n})$ time complexity constitutes a quadratic speedup over the respective classical algorithm.

For many other NP-hard problems, however, simply applying Grover's search over all possible options doesn't result in any interesting quantum speedup.
Ambainis et al.~showed that Grover's search can be combined with the dynamic programming (DP) of Bellman, Held and Karp \cite{Bel62, HK62} to obtain quantum speedups for a multitude of NP-hard problems \cite{ABIKPV19}.
This, notably, gave an $\widetilde \Oh(1.728^n)$ time algorithm for the \textsc{Travelling Salesman Problem} and $\widetilde \Oh(1.817^n)$ time algorithms for general graph vertex ordering problems.
Subsequently, these ideas have been applied for many other problems, for example, see \cite{MIKL20,SM20,KPV22,GL24,Tan25,CDLDB25}.

Although these algorithms are very useful for obtaining sound theoretical quantum speedups, they typically consume a large amount of Quantum Random Access Memory (QRAM), which is a quantum analogue of RAM.
In a nutshell, given $N$ memory data items $D_1$, $\ldots$, $D_N$, a QRAM read operation implements the unitary mapping $\ket{i}\ket{0} \mapsto \ket{i}\ket{D_i}$.
As a consequence, the advantageous difference from classical RAM is that the memory can be accessed in superposition, which is crucial to the algorithms above.

The complexity of such ``quantum dynamic programming'' algorithms is measured by the number of gates in the QRAM circuit model, which is the standard circuit model augmented with QRAM gates.
The underlying assumption for the quantum speedups above is that an implementation of a QRAM gate is ``cheap'' and its running time is comparable to a RAM gate in classical computation.
On one hand, there are theoretical proposals for QRAM architectures requiring only $\Oh(\poly \log N)$ time for a QRAM operation \cite{GLM08}.
On the other hand, there is an active debate about whether this assumption is reasonable \cite{JR25}, and the progress on the physical realization of QRAM has been limited \cite{SJXWWZZZGCLYHDWZLGZSLWY25}.

Taking into account the various technical difficulties, it is reasonable to assume that if cheap QRAM is attainable, then the size of the available memory will be small, at least in the near term.
Another plausible scenario is that only ``semi-cheap'' QRAM physically is possible, having a running time of an operation between $\Oh(\poly \log N)$ and trivial $\Oh(N)$.
In this paper we investigate whether a quantum speedup of the exponential quantum dynamic algorithms is possible in the setting with QRAM of limited size.
More specifically, we study the time-space tradeoffs by fixing the amount of available QRAM memory and investigating the best possible running time under this restriction.
The problem of computing with limited memory is also relevant classically, and such time-space tradeoffs have been studied for classical exponential-time algorithms \cite{KP10}.

\subsection{Our Results}

The algorithms of \cite{ABIKPV19} are based on the classical dynamic programming over subsets \cite{Bel62, HK62}.
Conceptually, given a problem with an $n$ element input, these algorithms compute some function $f : 2^{[n]} \to \mathbb N$ for all subsets $S \subseteq [n]$.
The dynamic programming approach computes the values of $f(S)$ from smaller to larger size of $S$ by using the previously computed values; the answer to the problem is usually given by $f([n])$.

On a high level, all of these quantum algorithms share a common structure: in the first stage, they run a classical dynamic programming to compute the values $f(S)$ for $S$ of small size, $|S| \leq \alpha n$, storing all of them in QRAM memory.
The second stage runs a quantum search procedure based on Grover's algorithm and uses the fact that $f(S)$ for small $S$ can be retrieved from memory in only $\Oh(\poly n)$ time (using the cheap-QRAM assumption).
The optimal time complexity of such an algorithm is achieved when $\alpha$ is chosen so that the complexities of the two stages are balanced.
On the other hand, computing the optimal running time of the algorithm when $\alpha$ is fixed naturally gives a time-space tradeoff.
Of course, the actual algorithms have more than just one parameter $\alpha$ and the best running time depends on the choice of their values.

In this paper, we pursue two directions:
\begin{enumerate}
    \item \textbf{Optimization tradeoffs.} Calculate the quantum time-space tradeoffs of the existing algorithms by optimizing the relevant parameters.
    \item \textbf{Improved tradeoffs.} Find novel quantum time-space tradeoffs that are more efficient than the optimization tradeoffs.
\end{enumerate}

There are two different algorithms in \cite{ABIKPV19} for two classes of tasks: \textit{divide \& conquer} problems and \textit{permutation} problems.
We examine both of these classes separately.

\paragraph{Divide \& Conquer Problems.}
Broadly, a problem of this type satisfies the divide \& conquer recurrence
\begin{equation} \label{eq:dnq}
    f(S) = \min_{\substack{X \subset S \\ |X| = k}} g(f(S \setminus X), f(X)),
\end{equation}
for any value of $k \in [|S|-1]$, where $g$ is some function that can be computed in $\Oh(\poly n)$ time.
The answer then is given by $f([n])$.
Examples of such problems include the \textsc{Travelling Salesman Problem} and \textsc{Feedback Arc Set}.
For these problems \cite{ABIKPV19} provides an algorithm with time and space complexity $\widetilde \Oh(1.728^n)$.

First, we calculate the optimization tradeoff for this algorithm:
\begin{restatable}{theorem}{dnqopt}\label{thm:dnq-opt}
    Suppose that the maximum available amount $\mathcal S^n$ of QRAM space is given by $\mathcal S = 2^{\be(\alpha)} \leq 1.728$ for some $\alpha \in [0,1/2]$.
    Let $k$ be the unique integer such that $\frac{1}{2^{k+1}} \leq \alpha < \frac{1}{2^k}$.
    Then there is a quantum algorithm for the divide \& conquer problems with time complexity $\widetilde\Oh(\mathcal T^n)$, where
    $\mathcal T = 2^{1-\frac{2-\be(2^k\alpha)}{2^{k+1}}}$.
\end{restatable}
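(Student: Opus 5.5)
We analyse the structure of the Ambainis et al.\ divide \& conquer algorithm with a fixed precomputation threshold. The algorithm classically computes $f(S)$ for all $|S|\le \alpha n$ by the Bellman--Held--Karp dynamic programming and stores these values in QRAM. This takes $\widetilde\Oh(2^{\be(\alpha)n})$ time and space, so it meets the budget $\mathcal S^n$. It then evaluates $f([n])$ by nested quantum minimum finding over the recurrence \eqref{eq:dnq}, choosing the split sizes so that every leaf of the recursion is a set of size at most $\alpha n$ and can be read from QRAM in $\Oh(\poly n)$ time.

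With $k$ as in the statement, we have $\alpha n \ge n/2^{k+1}$ but $\alpha n < n/2^k$. Halving $[n]$ repeatedly $k$ times therefore leaves pieces of size $n/2^k$, which are still too large to look up. We handle them as follows:
\begin{enumerate}
\item Apply the recurrence with $|X| = n/2$, then apply it to both halves with sizes $n/4$, and so on for $k$ levels. By the usual Grover/minimum-finding analysis, the quantum cost of level $j$ multiplies by the square root of the number of choices, $\sqrt{\binom{n/2^{j-1}}{n/2^{j}}}\approx 2^{n/2^{j}}$. The total factor over the $k$ levels is $2^{n(1-1/2^k)}$, with polynomial overhead from the $2^k$ branches (a constant, since $k$ is fixed) and from error reduction.
\item At the last level, each set $T$ with $|T| = m = n/2^k$ is split unevenly: choose $X\subset T$ with $|X| = \alpha n = (2^k\alpha) m$. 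Both $X$ and $T\setminus X$ then have size at most $\alpha n$, because $2^k\alpha\ge 1/2$. So both values are QRAM lookups, and the final minimisation costs $\sqrt{\binom{m}{2^k\alpha m}} \approx 2^{\be(2^k\alpha) m/2}$.
\end{enumerate}
Multiplying the two factors gives the exponent
\[
n\Bigl(1-\frac{1}{2^k}+\frac{\be(2^k\alpha)}{2^{k+1}}\Bigr) = n\Bigl(1-\frac{2-\be(2^k\alpha)}{2^{k+1}}\Bigr),
\]
which matches $\mathcal T$.

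Next we check that this second stage dominates. The time of the first stage is $2^{\be(\alpha)n}$, and we need this to be at most the quantum cost. For $\alpha$ up to the optimum point, where $\mathcal S = 1.728$, the precomputation is the cheaper stage, and beyond that point the balanced algorithm of \cite{ABIKPV19} applies. This is exactly the hypothesis $\mathcal S\le 1.728$. To confirm it for each $k$, I would compare $\be(\alpha)$ with the exponent $1-\frac{2-\be(2^k\alpha)}{2^{k+1}}$ using monotonicity, or a direct numerical check at the endpoints of each dyadic interval.

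I expect the main obstacle to be the rigorous treatment of the nested quantum minimum finding with bounded error. Each level must be amplified to error $1/\poly$ so that the composed algorithm succeeds, and this costs only polylogarithmic factors per level. Since there are $k+1 = \Oh(1)$ levels, the overhead stays within the $\widetilde\Oh$. I would invoke the standard variable-time / bounded-error nested Grover lemma used in \cite{ABIKPV19} to justify this.

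A secondary issue is whether an uneven last split is optimal, as opposed to redistributing unevenness across all levels. For the upper bound claimed here, however, exhibiting this particular choice suffices, so no optimality argument is needed.
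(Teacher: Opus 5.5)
Your proposal is correct and follows the paper's proof essentially step for step: precomputation in $\widetilde\Oh(2^{\be(\alpha)n})$, then $k$ levels of halving each contributing $2^{n/2^{i+1}}$, then a final uneven split off of a set of size $\alpha n$, whose cost is $2^{\be(2^k\alpha)n/2^{k+1}}$; both parts of that split are QRAM lookups because $\alpha \ge 1/2^{k+1}$. Below the balance point $\alpha\approx 0.236$ the precomputation is the cheaper stage. The only difference is how errors are handled: the paper avoids per-level error reduction by invoking minimum finding with an erroneous oracle (\Cref{thm:gro}), whereas you amplify at each of the $\Oh(1)$ levels (to error $2^{-\Omega(n)}$, i.e.\ $\Oh(n)$ repetitions per level), which is equally valid here since $k$ is a constant.
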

When $\alpha=1/2^{k+1}$, time complexity simplifies to $\mathcal T = 2^{1-\alpha}$, so the optimization tradeoff is lower bounded by $\mathcal T = 2^{1-\be^{-1}(\log_2 \mathcal S)}$ for all $\mathcal S$.

Next, we show an improved quantum time-space tradeoff:
\begin{restatable}{theorem}{dnqimp} \label{thm:dnq-imp}
    For any maximum available amount $\mathcal S^n$ of QRAM space between $\widetilde\Oh(\poly n)$ and $\widetilde \Oh(1.728^n)$, there is a quantum algorithm for the divide \& conquer problems with time complexity $\widetilde\Oh(\mathcal T^n)$, where $\mathcal T \in [2/\mathcal S^{0.268},2/\mathcal S^{0.201}]$.
\end{restatable}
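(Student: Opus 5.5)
The plan is to build a one-parameter family of algorithms that interpolates between two known endpoints. The first endpoint is the QRAM-free quantum divide \& conquer recursion: Grover minimum finding over the splits $X\subset S$ with $|X|=|S|/2$, applied recursively down to constant size. Its time is $\widetilde\Oh\big(\prod_i \binom{n/2^i}{n/2^{i+1}}^{1/2}\big)=\widetilde\Oh(2^n)$, i.e.\ $\mathcal T=2$ at $\mathcal S=1$. The second endpoint is the algorithm of \cite{ABIKPV19}, with $\mathcal T=\mathcal S=1.728$. Note that $2/1.728^{0.268}\approx 1.728$, so the upper constant $0.268$ is attained at the top of the range. The theorem then reduces to showing that, for every intermediate $\mathcal S$, some member of the family satisfies $\log_2(2/\mathcal T)\ge 0.201\log_2\mathcal S$. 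It should also never do better than $0.268\log_2\mathcal S$; that is the second half of the stated interval, which simply records where the tradeoff curve lies.

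To beat the optimization tradeoff of Theorem~\ref{thm:dnq-opt}, I would not store $f(S)$ for \emph{all} small sets. Instead I would use a classical time-space idea in the spirit of \cite{KP10}. Partition $[n]$ into $b$ blocks $B_1,\dots,B_b$ of equal size. At the outer level, enumerate, by Grover search (or by a classical loop, whichever balances better), a ``profile'' $(\beta_1,\dots,\beta_b)$ that prescribes how many elements of each block the small subproblems use. For a fixed profile, run the classical Bellman--Held--Karp DP only over sets $S$ with $|S\cap B_j|\le\beta_j|B_j|$, storing these values in QRAM. This costs $\prod_j\binom{|B_j|}{\le\beta_j|B_j|}$ space, which is exponentially smaller than $\binom{n}{\le\alpha n}$ when the profile is unbalanced. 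Finally, run the quantum recursion of \cite{ABIKPV19} on top, restricted to splits whose small pieces conform to the profile.

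Correctness follows from recurrence~\eqref{eq:dnq}. An optimal decomposition tree exists, and for some profile all of its leaves of size about $\alpha n$ satisfy the block constraints, for instance by choosing the profile according to a fixed ordering of the leaves. Minimizing over profiles therefore recovers $f([n])$. The total time is (number of profiles)$^{1/2}$ times the maximum of the classical preprocessing time and the quantum search time. Each of these is an explicit product of binomial coefficients, so the exponent is a closed form in entropies $\be(\cdot)$ of the parameters $\alpha,\beta_j,b$ and the recursion split sizes.

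The main obstacle is the last step, the optimization. For each target $\log_2\mathcal S$ we must choose the parameters to minimize $\mathcal T$ subject to the space constraint, and then certify the ratio bound $0.201\le\log_2(2/\mathcal T)/\log_2\mathcal S\le 0.268$ uniformly over the interval $[1,1.728]$. I would do this numerically on a fine grid of $\mathcal S$ values, for each grid point exhibiting explicit parameter choices. Between grid points I would use monotonicity: $\mathcal T$ is nonincreasing in the allowed space, and $\log_2\mathcal S$ varies by a small amount between neighbours. This lets the grid certificate extend to the whole interval with a small slack, and that slack is where the constants $0.201$ and $0.268$ come from. Near $\mathcal S\to1$, I expect the ratio to approach its lower limit, so a separate first-order expansion in small $\alpha$ may be needed there.
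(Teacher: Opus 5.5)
\textbf{The space-saving step fails, so your family yields nothing beyond Theorem~\ref{thm:dnq-opt}.}

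The leaves of any decomposition tree partition $[n]$, and they are dictated by the unknown optimal solution, not by you. A single profile must therefore satisfy $\beta_j|B_j|\ge\max_L|L\cap B_j|\ge |B_j|/m$ for every block, where $m$ is the number of leaves. Now take an input whose optimal solution is generic with respect to your fixed blocks, for example a tour that interleaves the blocks uniformly. Then every leaf meets every block in about $\alpha|B_j|$ elements. This forces $\beta_j\gtrsim\alpha$ for all $j$, and for constant $b$ gives $\prod_j\binom{|B_j|}{\le\beta_j|B_j|}\ge 2^{(\be(\alpha)-o(1))n}$, so there is no exponential saving.

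Your correctness claim that ``some profile conforms'' is true but vacuous: the all-ones profile always conforms. What you would need is a \emph{covering family}: every possible decomposition conforms to some member, and every member has a small table. With constant $b$ there are only $\poly(n)$ profiles, so no exponential trade of time for space is happening at all. With growing $b$ you are in KP10-style covering territory, and you give no such argument.

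\textbf{Why this breaks the proof.} Theorem~\ref{thm:dnq-opt} by itself provably violates the target for small $\mathcal S$. Along its curve,
\[
\frac{\log_2(2/\mathcal T)}{\log_2\mathcal S}\le \frac{2\alpha}{\be(\alpha)}\to 0 .
\]
For example, $\alpha=1/64$ gives $\mathcal S\approx1.084$ and $\mathcal T=2^{63/64}\approx1.978>2/\mathcal S^{0.201}\approx1.968$. So a first-order expansion near $\mathcal S\to1$ cannot rescue the argument. A finite grid also cannot certify the infinitely many scales that accumulate at $\mathcal S=1$.

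\textbf{The missing idea is fractalization} (Lemma~\ref{thm:fract}). You pay a genuinely exponential Grover cost, $\sqrt{\binom{n}{n/2}}=2^{n/2}$, over the top-level half-split, and run a $(\mathcal T,\mathcal S)$ algorithm on each half. This gives a $(\sqrt{2\mathcal T},\sqrt{\mathcal S})$ algorithm. Each curve $\mathcal T=2/\mathcal S^c$ is invariant under this map. Iterating it on the segment $\mathcal S\in[1.438,1.728]$ of Theorem~\ref{thm:dnq-opt} produces the whole tradeoff, with plateaus in between, and only one period needs a numerical check.

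The constants are not grid slack:
\begin{itemize}
\item $0.268$ solves $2/1.727^c=1.727$. This is the curve through the images of the $(1.728,1.728)$ point, which are the left ends of the plateaus.
\item $0.201$ solves $2/1.438^c=1.859$. This is the curve through the point where the first plateau meets the curve of Theorem~\ref{thm:dnq-opt}, which gives the worst ratio.
\end{itemize}
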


The improved tradeoff is based on the well-known classical time-space tradeoff for such problems by Gurevich and Shelah \cite{GS87} with time complexity $\widetilde\Oh(4^n 2^{-s})$ and space complexity $\Oh(2^s)$, for any $s = n/2, n/4, n/8, \ldots$
Their technique combines divide \& conquer over subsets with dynamic programming for subproblems of size $s$.
Our result is achieved by combining this algorithm with the quantum dynamic programming algorithm of \cite{ABIKPV19}.
Notably, the $\widetilde \Oh(1.728^n)$ algorithm is already based on the same idea: hence, the improved tradeoff is obtained by utilizing the divide \& conquer \textit{twice}!

Figure \ref{fig:dnq-short} shows the full graph of the two time-space tradeoffs.
Their complexities coincide when $\mathcal S \in [1.438, 1.728]$, reason being that they compile to the same algorithm in that range.
The explicit formula for the time complexity for the improved tradeoff is derived in Section \ref{sec:dnq}.
Interestingly, the improved tradeoff follows a fractal pattern -- we call this property \textit{fractalization} and explain it in Lemma \ref{thm:fract}.
Note that for all \cite{ABIKPV19}-style algorithms the time complexity is always at least the space complexity because of the first precomputation stage.
Therefore, no tradeoff will go below the $\mathcal T = \mathcal S$ curve at any point.

We note, however, that Theorem \ref{thm:dnq-opt} has a significant practical advantage over Theorem \ref{thm:dnq-imp}.
The first tradeoff requires only QRAM ``read'' operations on the fully classical data (qROM model).
In contrast, the improved tradeoff requires quantum data and also ``write'' QRAM operations.
On the flipside, it shows that fully writable QRAM with quantum data (qRAM) can provide a computational advantage.

\begin{figure}[ht]
\begin{center}
\input{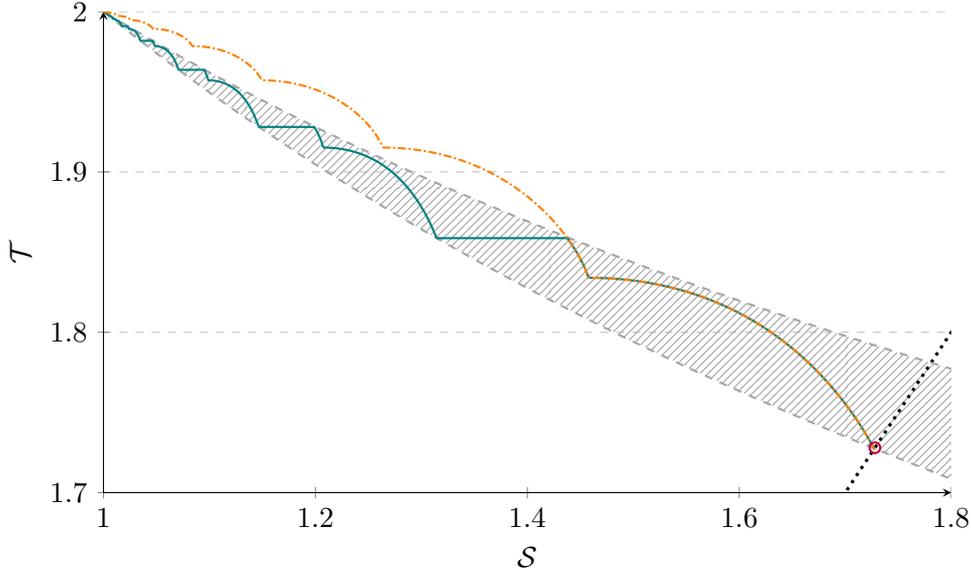}
\end{center}
\caption{Time-space tradeoffs for the divide \& conquer problems.
A point $(\mathcal S, \mathcal T)$ denotes an algorithm with time complexity $\widetilde\Oh(\mathcal T^n)$ that uses at most $\widetilde\Oh(\mathcal S^n)$ QRAM space.
The orange line denotes the optimization tradeoff of Theorem \ref{thm:dnq-opt}.
The teal line denotes the improved tradeoff of Theorem \ref{thm:dnq-imp}.
It is bounded below by $\mathcal T = 2/\mathcal S^{0.268}$ and above by $\mathcal T = 2/\mathcal S^{0.201}$ (dashed gray).
The dotted line shows the function $\mathcal T = \mathcal S$, giving a barrier to improving the tradeoffs after $\mathcal S \approx 1.728$, which corresponds to the quantum dynamic programming algorithm of \cite{ABIKPV19} (red circle).}
\label{fig:dnq-short}
\end{figure}

\paragraph{Permutation Problems.}

The second type of problems can be informally described as finding the minimum of some function $f : S_n \to \mathbb N$, where $S_n$ is the set of permutations on $n$ elements.
These problems reduce to computing values of some function $f(L,M,R) = \min_{\pi \in S_M} \text{cost}(L, \pi, R)$, where $L \cup M \cup R = [n]$ is a partition, $\text{cost}$ is some function that depends on the problem and $S_M$ is the set of permutations of the elements of $M$.
Formally, we require the condition
\begin{equation}
    f(L,M,R) = \min_{\substack{X \cup Y = M \\ |X| = k}} g(f(L,X,Y\cup R), f(L\cup X, Y, R)),
\end{equation}
for some problem-specific function $g$ computable in $\Oh(\poly n)$ time.
The answer then is given by $f(\varnothing,[n],\varnothing)$.
In other words, this condition says that the optimal permutation can be found by decomposing it into subsequences, for which we can find the cost independently and combine them afterwards.

For such problems, Bellman, Held and Karp dynamic programming gives a classical algorithm with $\widetilde\Oh(2^n)$ time and space complexity.
This class includes a multitude of graph vertex ordering problems like \textsc{Treewidth}, \textsc{Pathwidth}, \textsc{Cutwidth}, \textsc{Minimum Fill-In}, \textsc{Optimal Linear Arrangement} etc.~\cite{Bodlaender2012}.

As introduced in \cite{ABIKPV19}, such problems can be modeled by the \textsc{Hypercube Path} query problem: given a subgraph $G$ of the $n$-dimensional Boolean hypercube accessible by queries to the edges of the hypercube, determine whether a path exists between $0^n$ and $1^n$ in $G$.
A quantum algorithm for this problem gives an algorithm with only an $\Oh(\poly n)$ factor increase in the time and space complexity for any of the permutation problems.
The time and space complexity of their quantum algorithm is $\widetilde\Oh(1.817^n)$.

While optimizing the tradeoff of the original divide \& conquer quantum algorithm (Theorem \ref{thm:dnq-opt}) is relatively straightforward, it becomes quite challenging in the case of the \textsc{Hypercube Path}.
The quantum algorithm of \cite{ABIKPV19} is recursive: it calls itself on subcubes of the original hypercube.
If the sole goal is to minimize the time complexity of the algorithm, we need to solve a short system of equations with a handful of parameters, which can be computed numerically without much effort (see Section 3.1 in \cite{ABIKPV19}). 

Since we are interested in the time complexity of the algorithm with limited space, the situation is much more difficult.
By the construction of the algorithm, each recursive level has a constant number of recursive calls and the depth of the recursion is $\Oh(\log n)$.
Therefore, if the allowed space amount is $\widetilde\Oh(\mathcal S^n)$, then each recursive call is still allowed to use $\widetilde\Oh(\mathcal S^n / \poly(n)) = \widetilde\Oh(\mathcal S^n)$ space (since in total there are $c^{\Oh(\log n)} = \Oh(\poly n)$ recursive calls).
That means that while the dimension of the subcubes decreases, the total amount of available space essentially does not decrease.
Thus, each recursive call must have completely different parameters from its parent call, considerably amplifying the amount of independent parameters and making it unfeasible for numerical optimization.

Nonetheless, we find a way to compute this time-space tradeoff in a feasible time by using dynamic programming itself to compute the required time complexities.
The approach is described in Section \ref{sec:hp}, by which we obtain the following result:

\begin{restatable}{theorem}{hpopt} \label{thm:hp-opt}
    For any maximum available amount $\mathcal S^n$ of QRAM space between $\widetilde\Oh(\poly n)$ and $\widetilde \Oh(1.817^n)$, there is a quantum algorithm for the permutation problems with time complexity $\widetilde\Oh(\mathcal T^n)$, where $\mathcal T \in [2/\mathcal S^{0.161},2/\mathcal S^{0.099}]$.
\end{restatable}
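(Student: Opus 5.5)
The plan is to reduce the permutation problems to \textsc{Hypercube Path}, as in \cite{ABIKPV19}, and analyze the recursive algorithm of \cite{ABIKPV19} under a space budget. That algorithm works on a subcube of dimension $d$. It precomputes classically, by Bellman--Held--Karp dynamic programming, reachability for all vertices at Hamming weight at most $\alpha d$ from the bottom and, symmetrically, from the top. These values are stored in QRAM. It then runs Grover search over middle layers: it picks intermediate vertices at weights $\beta_1 d,\ldots$ and recursively solves the subcubes between consecutive chosen vertices, nesting Grover search with amplitude amplification.

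The key observation is the one already noted in the introduction. The recursion tree has constant branching and depth $\Oh(\log n)$, so every recursive call may itself use $\widetilde\Oh(\mathcal S^n)$ QRAM space. In particular, a call on a subcube of dimension $m=\mu n$ has its own precomputation threshold. That threshold is constrained by $\binom{m}{\le \alpha' m}\le \mathcal S^n$, i.e.\ $\mu\,\be(\alpha')\le \log_2\mathcal S$, rather than by a fixed fraction of $m$.

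I would define, for a fixed $\mathcal S$, a function $T(\mu)$. It is the exponent of the best time achievable for \textsc{Hypercube Path} on a subcube of relative dimension $\mu\in[0,1]$ when the absolute space budget is $\mathcal S^n$. Writing out the cost of one level gives a recurrence. The precomputation costs $\mu\be(\alpha')$ when that quantity is below $\log_2\mathcal S$. The Grover layer over choices of middle vertices contributes half the log of the number of choices. Each inner call on a smaller subcube of relative dimension $\mu'<\mu$ contributes $T(\mu')$, or a base-case lookup cost if that subcube lies within the precomputed region. The recurrence has the form
\[
T(\mu)=\min_{\text{params}}\max\Bigl(\mu\,\be(\alpha'),\ \tfrac12\,(\text{log \#choices}) + \max_i T(\mu_i)\Bigr),
\]
with the trivial fallback of plain classical DP, $T(\mu)\le \mu$, when $\mathcal S$ is large enough.

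The procedure is then to discretize $\mu$ on a fine grid and compute $T$ by dynamic programming from small $\mu$ upward. At each grid point the few local parameters are optimized numerically, and monotonicity of $T$ is used to round $\mu'$ up. This yields $\mathcal T=2^{T(1)}$ for each $\mathcal S$ on a grid of values in $(1,1.817]$. Correctness and the claimed time follow from the analysis of \cite{ABIKPV19} applied level by level. The space bound follows because there are only $\poly n$ calls, each using at most $\mathcal S^n$.

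Finally, the two-sided bound $\mathcal T\in[2/\mathcal S^{0.161},2/\mathcal S^{0.099}]$ is checked numerically on the computed curve. At the endpoints the curve satisfies $\mathcal S\to1$ giving $\mathcal T\to 2$ (classical search with polynomial space, matching $2/\mathcal S^{c}$), and at $\mathcal S=1.817$ it recovers $\mathcal T=1.817$. Between grid points, monotonicity of $\mathcal T$ in $\mathcal S$, together with a sufficiently fine grid, lets us interpolate safely.

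The main obstacle is making the numerical DP rigorous. Rounding the subproblem dimensions and parameters must only ever produce valid (upper-bound) algorithms, which handles the upper envelope. The lower envelope $2/\mathcal S^{0.161}$ is only a bound on the computed curve and not a lower bound on all algorithms, so it needs only a grid check. I would first ensure that every rounding direction is conservative, so that each reported point corresponds to an actual algorithm.
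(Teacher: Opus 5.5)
Your set-up of the optimization matches what the paper does with its tables $\mathtt T$ and $\mathtt P$. That set-up consists of the reduction to \textsc{Hypercube Path}, a recurrence in which every recursive call keeps the full absolute budget $\mathcal S^n$, and a numerical dynamic program over the relative subcube dimension. The gap is in the final step, where you certify $\mathcal T\in[2/\mathcal S^{0.161},2/\mathcal S^{0.099}]$ by a grid check with monotone interpolation. The statement covers the whole range down to polynomial space, i.e.\ $\mathcal S\to1$, where both envelopes tend to $2$. Write $\mathcal T(\mathcal S)$ for the computed curve and let $\mathcal S_1>1$ be your smallest grid point. For $\mathcal S\in(1,\mathcal S_1)$, monotonicity only yields $\mathcal T(\mathcal S)\le\mathcal T(1)=2$, which already violates $2/\mathcal S^{0.099}<2$. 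Knowing $\mathcal T\to2$ as $\mathcal S\to1$ says nothing about the ratio $\log_2(2/\mathcal T)/\log_2\mathcal S$, which is exactly what the two envelopes constrain; the same objection applies to your grid check of the lower envelope. This is not a boundary technicality. The tradeoff consists of infinitely many rescaled copies of the plateau pattern accumulating at $\mathcal S=1$, so no finite computation covers it. In addition, for such $\mathcal S$ the relevant subcube dimensions are of order $\log_2\mathcal S$, which falls below any fixed grid spacing in $\mu$.

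The missing ingredient is Lemma \ref{thm:fract}, together with the paper's observation that it is built into the \textsc{Hypercube Path} optimization. At the top level one may take $\alpha_1=\dots=\alpha_k\to0$ and $\alpha_{k+1}=1/2$. This is a Grover search over the middle layer followed by two subcubes of dimension $n/2$, whose relative budget is $\mathcal S^2$. It gives $\mathcal T(\mathcal S)\le\sqrt{2\,\mathcal T(\mathcal S^2)}$, and this map preserves every curve $\mathcal T=2/\mathcal S^c$. Hence it suffices to verify $\mathcal T\le 2/\mathcal S^{0.099}$ on one fundamental interval such as $[\sqrt{1.817},1.817]$, where a finite grid check is meaningful because it is bounded away from $1$. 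The bound then propagates by induction to all of $(1,1.817]$.

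The same observation produces the constants, which your proposal never accounts for. The fractal images of the time-optimal point $(1.817,1.817)$ are the leftmost points of the plateaus. They lie on $2/\mathcal S^{c}$ with $2/1.816905^{c}=1.816905$, i.e.\ $c\approx0.161$. The exponent $0.099$ comes from the analogous curve through the worst point of a period.

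A minor correction: the $\mathcal T=2$, polynomial-space endpoint is the recursive quantum Grover search over middle layers, not a classical search.
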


Then we examine the classical time-space tradeoff called the \emph{pairwise scheme} introduced in \cite{PK09}.
We combine it with a quantum algorithm for dynamic programming in $n$-dimensional lattice graphs from \cite{GKMV21} to obtain the following time-space tradeoff:

\begin{restatable}{theorem}{hpps} \label{thm:hp-ps}
     For any maximum available amount $\mathcal S^n$ of QRAM space in the range $\mathcal S \in [1.631, 1.826]$, there is a quantum algorithm for the permutation problems with time complexity $\widetilde\Oh(\mathcal T^n)$, where $\mathcal T = 2.511/\mathcal S^{0.527}$.
     By fractalizing this algorithm, we obtain a tradeoff for any space amount between $\widetilde\Oh(\poly n)$ and $\widetilde \Oh(1.826^n)$ with time complexity $\widetilde\Oh(\mathcal T^n)$, where $\mathcal T \in [2/\mathcal S^{0.151},2/\mathcal S^{0.088}]$.
\end{restatable}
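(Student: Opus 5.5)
The plan has two parts. First, combine the classical pairwise scheme of \cite{PK09} with the quantum lattice-graph dynamic programming of \cite{GKMV21} to get a single tradeoff curve on $\mathcal S \in [1.631, 1.826]$. Second, fractalize that curve down to polynomial space.

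\textbf{Step 1: structure of the pairwise scheme.} Recall how the pairwise scheme works for \textsc{Hypercube Path}. The $n$ coordinates are split into $n/2$ pairs. One guesses, for each pair, which of its two elements comes first in the optimal permutation. The hypercube reachable under this constraint then decomposes into a product of small gadgets: each pair contributes a $3$-element chain $\{00,10,11\}$ or $\{00,01,11\}$. The admissible subsets therefore form an $(n/2)$-dimensional lattice $\{0,1,2\}^{n/2}$ with $3^{n/2}$ vertices. More generally, one groups coordinates into blocks of size $b$ and fixes partial orders inside blocks. Each block then becomes a small lattice, and the reachable set is a product of such lattices.

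\textbf{Step 2: the quantum time and space for one partial order.} Classically, one enumerates over the $2^{n/2}$ (or $\binom{b}{\cdot}$-type) choices and runs DP on each lattice. Quantumly, I would do two things.
\begin{enumerate}
    \item Grover-search over the choice of the partial order (the guess per pair), costing a square root of $2^{n/2}$.
    \item Inside, run the \cite{GKMV21} algorithm for paths in a $d$-dimensional lattice $\{0,\dots,D\}^{d}$. That algorithm precomputes DP values for the low layers, storing them in QRAM, and then runs a recursive quantum search over midpoints.
\end{enumerate}
Its complexity is an exponential in $d$ with a parameter controlling how many layers are precomputed. Fixing that parameter so that the stored layers have size $\mathcal S^n$ yields the space bound. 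The search over midpoint layers then gives the time as a function of $\mathcal S$. Multiplying by the outer Grover factor gives $\mathcal T$. I would compute the resulting exponent via entropy/multinomial estimates of layer sizes in $\{0,1,2\}^{n/2}$ (or in the chosen block lattice). I would then numerically fit it as $\mathcal T = 2.511/\mathcal S^{0.527}$ over the range where the constraints hold. The endpoints are determined as follows:
\begin{enumerate}
    \item The upper endpoint $1.826$ is where the time meets $\mathcal T=\mathcal S$.
    \item The lower endpoint $1.631$ is where the precomputed layer reaches the boundary of the valid parameter range.
\end{enumerate}

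\textbf{Step 3: fractalization.} Extending the tradeoff to smaller space uses the fractalization principle of Lemma \ref{thm:fract}, proven later in the paper. Given an algorithm with space $\mathcal S^n$ and time $\mathcal T^n$, one first classically or quantumly splits the hypercube into subcubes of dimension $n/2^j$ by divide \& conquer over middle layers. This costs a factor $2^{n(1-1/2^j)}$ in time while shrinking the space exponent by $2^{-j}$. The pair $(\mathcal S,\mathcal T)$ thus maps to $(\mathcal S^{1/2^j}, 2^{1-1/2^j}\mathcal T^{1/2^j})$. The image of the curve $\mathcal T=2.511/\mathcal S^{0.527}$, $\mathcal S\in[1.631,1.826]$, under these maps, possibly together with interpolating ranges, covers all $\mathcal S$ down to $1$. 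For the stated bounds, write $\mathcal T' = 2\,(\mathcal T/2)^{1/2^j}$ and $\mathcal S' = \mathcal S^{1/2^j}$. Then $\log(2/\mathcal T')/\log \mathcal S' = \log(2/\mathcal T)/\log\mathcal S$ is invariant. So it suffices to compute the min and max of $\log_{\mathcal S}(2/\mathcal T)$ over one fundamental period, which gives $0.088$ and $0.151$.

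\textbf{Main obstacle.} I expect the hard part to be Step 2: correctly adapting \cite{GKMV21} to a fixed QRAM budget. This means checking that the recursive midpoint search in the lattice uses only the stored low layers, and verifying the exponent numerically. A further concern is that the Grover search over partial orders must not multiply the QRAM usage; it reuses the same memory sequentially within each branch.
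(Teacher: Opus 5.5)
You have the right ingredients: the pairwise scheme, Grover search over the fixed orders, the lattice algorithm of \cite{GKMV21} inside, and fractalization via Lemma~\ref{thm:fract}. However, Step 2 turns the wrong knob, and as described it cannot produce the claimed curve. You fix the order of \emph{all} $n/2$ pairs, so the outer Grover search always costs $2^{n/4}\approx 1.189^n$. You then vary the precomputation budget of \cite{GKMV21} on the fixed lattice $\{0,1,2\}^{n/2}$.

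The inner algorithm's time is at least its space, because the classical precomputation fills the stored table. So every point you can reach satisfies $\mathcal T \geq 2^{1/4}\mathcal S$.
\begin{itemize}
\item The best point is the balanced one, with $\mathcal S = 2.65907^{1/2}\approx 1.631$ and $\mathcal T\approx 1.939$.
\item Giving the inner algorithm more memory only makes things worse. Near $\mathcal S=1.826$ you would get $\mathcal T\geq 2.17>2$.
\item Giving it less memory moves you to $\mathcal S<1.631$, which is outside the claimed range.
\end{itemize}
In particular, your upper endpoint ``$\mathcal T=\mathcal S$ at $1.826$'' is unreachable in your construction, and no numerical fit of what it produces yields $2.511/\mathcal S^{0.527}$.

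The missing idea is to fix the relative order of only $k$ of the pairs, and to use $k\in[0,n/2]$ as the tradeoff parameter, leaving the remaining $n-2k$ coordinates unpaired. Then the outer Grover search runs over $2^k$ orders. Each subproblem is a path problem in the mixed lattice $\{0,1,2\}^k\times\{0,1\}^{n-2k}$. On that lattice you run \cite{GKMV21} at its own time-optimal point, with no memory restriction inside, giving time $=$ space $=\widetilde\Oh(2.65907^k\cdot 1.82653^{n-2k})$. This yields $\mathcal S^n = 1.82653^n/1.25465^k$ and $\mathcal T^n=\sqrt{2^k}\,\mathcal S^n = 1.82653^n\cdot 1.12718^k$. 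Both $\log\mathcal S$ and $\log\mathcal T$ are linear in $k/n$, so eliminating $k$ gives exactly the power law $\mathcal T=2.511/\mathcal S^{0.527}$, with no fitting needed. The endpoints follow directly: $k=0$ gives $\mathcal T=\mathcal S=1.826$ (plain \cite{GKMV21} on the hypercube), and $k=n/2$ gives $\mathcal S\approx 1.631$.

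Your Step 3 is essentially right: the invariance of $\log(2/\mathcal T)/\log\mathcal S$ under fractalization is the key observation. Take the extremes over the Pareto frontier including the plateaus, not over the raw curve. The raw curve gives about $0.063$ at $\mathcal S=1.631$, but that end is dominated by the fractalized point $(1.351,1.911)$. The constant $0.088$ comes from where that plateau meets the curve, at $\mathcal S\approx 1.678$.
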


The two tradeoffs are depicted in Figure \ref{fig:hp-short}, the first one being more optimal.
Since it is obtained numerically, the graph is not continuous but given by a discrete set of points.
Both tradeoffs follow the fractal-like pattern because of Lemma \ref{thm:fract}.
        
\begin{figure}[ht]
\begin{center}
\input{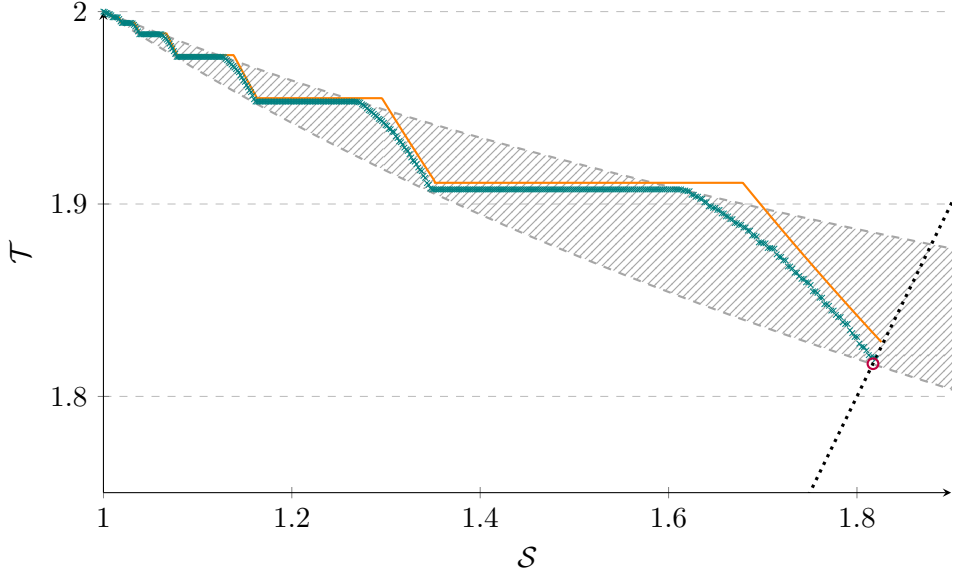}
\end{center}
\caption{Time-space tradeoffs for the permutation problems.
A point $(\mathcal S, \mathcal T)$ denotes an algorithm with time complexity $\widetilde\Oh(\mathcal T^n)$ that uses at most $\widetilde\Oh(\mathcal S^n)$ QRAM space.
The discrete teal plot denotes the optimization tradeoff of Theorem \ref{thm:hp-opt}.
It is bounded below by $\mathcal T = 2/\mathcal S^{0.161}$ and above by $\mathcal T = 2/\mathcal S^{0.099}$ (dashed gray).
The orange line denotes the quantum pairwise scheme tradeoff of Theorem \ref{thm:hp-ps}.
The dotted line shows the function $\mathcal T = \mathcal S$, giving a barrier to improving the tradeoffs after $\mathcal S \approx 1.817$, which corresponds to the quantum \textsc{Hypercube Path} dynamic programming algorithm of \cite{ABIKPV19} (red circle).}
\label{fig:hp-short}
\end{figure}

While the second tradeoff has a strictly larger time complexity, it has a few benefits: first, it is quite close to the first tradeoff.
Secondly, we can write down an explicit formula for its running time and thirdly, the corresponding algorithm is more concise and has a smaller number of parameters.
Both of the tradeoffs use QRAM with ``write'' operations and quantum data (qRAM).

\paragraph{Our Techniques.}
Here we briefly summarize the techniques we use to obtain our tradeoffs.

The divide \& conquer tradeoff of \Cref{thm:dnq-opt} we obtain by optimizing the parameters in the algorithm of \cite{ABIKPV19}.
The improved tradeoff of \Cref{thm:dnq-imp} we obtain by combining the Gurevich and Shelah classical tradeoff \cite{GS87} with the quantum dynamic programming algorithm for \textsc{TSP} \cite{ABIKPV19}: this tradeoff is surprising, as the Gurevich and Shelah tradeoff is already being applied once in the original quantum algorithm.

The optimization tradeoff of \Cref{thm:hp-opt} is obtained by optimizing the (many) parameters of the \textsc{Hypercube Path} quantum dynamic programming algorithm \cite{ABIKPV19}.
This optimization is complicated, and we provide an efficient dynamic programming procedure that allows us to numerically calculate the complexities.
The slightly worse, but cleaner tradeoff of \Cref{thm:hp-ps} we obtain by applying the \emph{quantum dynamic programming on multidimensional lattices} of \cite{GKMV21} to the \emph{pairwise scheme} classical tradeoff \cite{PK09}.
The latter connection is interesting, as it shows that some tradeoffs (the pairwise scheme in this case) may restrict the problem to searching for a path in a subgraph of the hypercube, where the structure of this subgraph still allows for quantum speedups.

\subsection{Related Work} Recently, Jeffery and Pass have demonstrated a quantum time-space tradeoff for directed $st$-connectivity \cite{JP25}.
While the \textsc{Hypercube Path} problem is an instance of directed $st$-connectivity, the time complexity of their algorithm in this case is superexponential, trading off with classical space but always using only $\Oh(n^2)$ quantum space.
However, a simple application of Grover's search over a path vertex in the middle layer of the hypercube and applying the algorithm recursively already gives an $\widetilde\Oh(2^n)$ quantum time algorithm with only $\Oh(\poly n)$ quantum space complexity (which corresponds to $\mathcal S = 1$ and $\mathcal T = 2$ in \Cref{fig:hp-short}).

While our work to the best of our knowledge is the first that examines the quantum time-space tradeoffs for exponential dynamic programming, we've learned that the long-standing classical tradeoffs of Koivisto and Parviainen \cite{KP10} were recently improved, independently and concurrently by \cite{ANW26} and \cite{DK26}.

\section{Preliminaries}

In this section, we introduce preliminary notation and definitions.

\subsection{Notation.}

Throughout the paper, we will use $n$ to denote the size of the problem.
We use the shorthand $\Oh(\poly n)$ to denote a function in $\Oh(n^c)$ for some constant $c$.
We will often use the $\widetilde \Oh(f(n))$ notation to hide $\Oh(\poly \log(f(n)))$ factors.
If an algorithm uses $\widetilde\Oh(\mathcal T^n)$ time and $\widetilde\Oh(\mathcal S^n)$ space, we say that it is an $(\mathcal T, \mathcal S)$ time-space tradeoff. 

\subsection{Model.} Our algorithms work in the standard quantum circuit model, augmented with QRAM (Quantum Random Access Memory).
In QRAM, we will assume that we have a memory of size $N = \widetilde \Oh(\mathcal S^n)$ data items denoted by $\ket{D_1}, \ldots, \ket{D_N}$.
We will assume that each data item is supported on $\Oh(\poly n)$ qubits (or bits, if the data is classical).
This is similar to the classical word RAM model (here we have algorithms of exponential complexity; hence, the size of the word is polylogarithmic of that).

A QRAM ``read-only'' operation implements a unitary mapping
\begin{equation}
    \sum_{i=1}^N \alpha_i \ket{i}_a \ket{b}_t \ket{D_1,\ldots,D_N}_m \longmapsto \sum_{i=1}^N \alpha_i \ket{i}_a \ket{b\oplus D_i}_t \ket{D_1,\ldots,D_N}_m,
\end{equation}
where subscripts $a$, $t$ and $m$ denote \emph{address}, \emph{target} and \emph{memory} registers.
On the other hand, a stronger QRAM ``read-write'' operation implements a unitary mapping
\begin{equation}
    \sum_{i=1}^N \alpha_i \ket{i}_a \ket{b}_t \ket{D_1,\ldots,D_N}_m \longmapsto \sum_{i=1}^N \alpha_i \ket{i}_a \ket{D_i}_t \ket{D_1,\ldots,D_{i-1},b,D_{i+1},\ldots,D_N}_m.
\end{equation}

As we can see, there are four distinct QRAM models an algorithm can employ: classical or quantum data, read-only or read-write operations.
In this paper we use only two: classical data, read-only operations (\textit{qROM}); quantum data, read-write operations (\textit{qRAM}).
The terminology of QRAM is not fully consistent throughout the literature, see e.g.~\cite{JR25,ABDLS24} for more definitions and discussion.

Throughout the paper, we will assume that QRAM operations take unit time.

\subsection{Tools.}

We will frequently use the following well-known approximation:
\begin{theorem}[Entropy approximation] \label{thm:be}
    For any $k \in [0, 1]$, we have
    \begin{equation}
        \binom{n}{\leq k} = \sum_{i=0}^k \binom{n}{k} \leq 2^{\be\left(\frac k n\right)n},
    \end{equation}
    where $\be(x) = -(x \log_2 x + (1-x) \log_2 (1-x))$ is the binary entropy function.
\end{theorem}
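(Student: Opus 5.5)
The plan is the standard binomial-theorem argument. Write $p = k/n$ and read the statement as intended: the sum runs over $i$ with summand $\binom{n}{i}$, and $k$ is an integer with $0 \le k \le n/2$, i.e.\ $p \in [0,1/2]$. The restriction $p \le 1/2$ is genuinely needed, since for $p > 1/2$ the left-hand side is close to $2^n$ while $2^{\be(p)n}$ is smaller. It is also the only regime the paper uses, because $\alpha \in [0,1/2]$ throughout. The boundary case $p=0$ is trivial, since both sides equal $1$. The case $p=1/2$ is also trivial, since the right-hand side is $2^n$. So assume $0 < p < 1/2$.

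First, expand $1 = (p + (1-p))^n = \sum_{i=0}^n \binom{n}{i} p^i (1-p)^{n-i}$ and drop the terms with $i > k$, which gives
\begin{equation*}
  1 \;\ge\; \sum_{i=0}^{k} \binom{n}{i} p^i (1-p)^{n-i}.
\end{equation*}
Next, lower bound each remaining summand uniformly. Write $p^i(1-p)^{n-i} = (1-p)^n \bigl(\tfrac{p}{1-p}\bigr)^i$. Since $p \le 1/2$, the ratio $\tfrac{p}{1-p}$ is at most $1$, so this quantity is non-increasing in $i$. Hence for every $i \le k$ it is at least $p^k(1-p)^{n-k}$. This gives
\begin{equation*}
  1 \;\ge\; p^k (1-p)^{n-k} \sum_{i=0}^{k} \binom{n}{i}.
\end{equation*}

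Finally, compute $p^k(1-p)^{n-k} = 2^{n(p\log_2 p + (1-p)\log_2(1-p))} = 2^{-\be(p) n}$, using $k = pn$. Rearranging gives $\sum_{i\le k}\binom{n}{i} \le 2^{\be(k/n)n}$, as claimed.

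There is no real obstacle here. The only step needing care is the monotonicity of $\bigl(\tfrac{p}{1-p}\bigr)^i$, which is exactly where $k \le n/2$ enters. For the paper's uses, the hypothesis of the statement should therefore be read as $k/n \in [0,1/2]$.
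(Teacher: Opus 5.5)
Your proof is correct, and it is the standard binomial-theorem argument. The paper gives no proof of its own; it quotes the bound as well known, so there is no competing route to compare. Your repair of the garbled statement is the right one: the summand should be $\binom{n}{i}$, and the cumulative sum needs $0 \le k \le n/2$. Your reason for excluding $k/n > 1/2$ is also sound.

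One remark in your last paragraph is inaccurate, though: the paper does not only use the regime $k/n \le 1/2$. It also applies the entropy bound to single binomial coefficients whose ratio exceeds $1/2$. Two examples:
\begin{itemize}
\item In the proof of Theorem~\ref{thm:dnq-opt} it uses $\binom{n/2^k}{\alpha n} \le 2^{\be(2^k\alpha)n/2^k}$, where $2^k\alpha \in [1/2,1)$.
\item In Section~\ref{sec:hp} it uses $\binom{\alpha_{i+1}n}{\alpha_i n}$, where $\alpha_i/\alpha_{i+1}$ may be close to $1$.
\end{itemize}
Under your reading, these uses are not covered by the cumulative bound. They need the single-term bound $\binom{n}{k} \le 2^{\be(k/n)n}$ for all $0 \le k \le n$.

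That bound comes directly out of your first step: keep only the $i=k$ term of the expansion, and no monotonicity is needed. Alternatively, it follows from your result via $\binom{n}{k} = \binom{n}{n-k}$ and $\be(p) = \be(1-p)$. So the corrected statement should record both parts: the single-coefficient bound for all $k/n \in [0,1]$, and the cumulative bound for $k/n \in [0,1/2]$.
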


The key subroutine of our tradeoffs is Grover's search.
We use its updated version:
\begin{theorem}[Quantum minimum finding with erroneous oracle, Corollary 10 in \cite{ABBLS25}] \label{thm:gro}
    Let $\mathcal A : [N] \to [\Sigma]$ be a bounded-error quantum algorithm with running time $T$ (correct with constant probability).
    Then there exists a bounded-error quantum algorithm that computes $\min_{i \in [N]} \mathcal A(i)$ in $\Oh(\sqrt{N}(\log N + T))$ time.
\end{theorem}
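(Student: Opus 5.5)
The plan is to reduce minimum finding to repeated searches, following Dürr and Høyer. Each search will be implemented with the robust search of Høyer, Mosca and de Wolf, which tolerates a bounded-error oracle at only a constant-factor overhead. This avoids the naive $\Oh(\log N)$ multiplicative amplification of $\mathcal A$. The $\log N$ term in the bound accounts for the gate cost of each Grover iteration: the diffusion operator and the index register handling act on $\lceil\log_2 N\rceil$ qubits. So one iteration costs $\Oh(\log N + T)$, and it suffices to show that $\Oh(\sqrt N)$ iterations are used in total.

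\textbf{Step 1 (robust threshold search).} For a threshold $y$, define the bounded-error predicate $P_y(i) = [\mathcal A(i) < y]$. One call to $\mathcal A$ plus a comparison evaluates it in time $\Oh(T + \log \Sigma)$, where the comparison cost is absorbed in $\poly n$ word operations. I would invoke the Høyer--Mosca--de Wolf result. It finds a marked element among $N$, with constant success probability, using $\Oh(\sqrt{N/t})$ expected applications of a bounded-error oracle when $t$ elements are marked. It also detects $t=0$ using $\Oh(\sqrt N)$ applications. Crucially, its error does not accumulate with the number of iterations, because the construction uses amplitude amplification with interleaved error reduction at geometrically increasing precision.

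\textbf{Step 2 (Dürr--Høyer iteration).} Start with a random index $i_0$ and set $y = \mathcal A(i_0)$. Then repeatedly run the robust search for an $i$ with $P_y(i)=1$ and update $y \leftarrow \mathcal A(i)$. The standard analysis says that the element of rank $r$ becomes the threshold with probability at most $1/r$. Together with the $\Oh(\sqrt{N/r})$ cost of searching from rank $r$, this gives expected total iterations $\sum_r \frac1r\sqrt{N/r} = \Oh(\sqrt N)$. We stop after a budget of $c\sqrt N$ iterations, and Markov's inequality gives constant success probability.

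\textbf{Main obstacle.} The hard part is controlling errors across the $\Oh(\log N)$ expected threshold updates. A wrong output of $\mathcal A$ could set a spurious threshold: either too low, making the search fail, or too high, which wastes work. Amplifying each stage to error $1/\log N$ would reintroduce a $\log\log N$ or $\log N$ factor. What I would try first is the following. Re-verify each new candidate with a constant number of fresh calls to $\mathcal A$, which costs only $\Oh(T)$ per stage, and $\Oh(T\log N) \subseteq \Oh(\sqrt N T)$ overall. A wrong candidate then only causes one extra stage with constant probability, so the expected number of stages stays bounded by a constant times the error-free count. Alternatively, one could fold all stages into a single variable-time amplitude amplification, in the style of Ambainis' variable-time search, so that only one global constant error bound is needed. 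Finally, we boost to constant overall success probability by taking the best of $\Oh(1)$ independent repetitions.
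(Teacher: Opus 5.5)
The paper gives no proof of this statement. It imports it as Corollary~10 of \cite{ABBLS25}, so your H{\o}yer--Mosca--de~Wolf plus D{\"u}rr--H{\o}yer route has to stand on its own. It has a genuine gap exactly at the step you flag as the main obstacle. Re-verifying each new candidate with a constant number of fresh calls only controls thresholds that are too high. A threshold that is too low cannot be recovered from. Suppose some evaluation (of $i_0$ or of a returned candidate) yields $y \le \min_j f(j)$, where $f$ is the function $\mathcal A$ computes, while the current index is not a minimizer. Then $P_y$ marks nothing from then on, every later search fails, and the algorithm halts holding a wrong answer. Take an adversarial $\mathcal A$ that with probability $1/3$ outputs a value below $\min_j f(j)$. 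A majority over $\Oh(1)$ calls then makes this error with some constant probability $\epsilon>0$ at each stage. There are $\Theta(\log N)$ stages with high probability, so the failure probability is about $1-(1-\epsilon)^{\Theta(\log N)} \to 1$. Taking the ``best'' of $\Oh(1)$ repetitions then actively selects the spuriously low value.

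The missing observation is that amplifying the evaluation of a single, already measured index is cheap. It is classical post-processing, not amplification of the oracle inside the superposition. $\Theta(\log N)$ fresh runs per check cost $\Oh(T\log N)$ each, and $\Oh(T\poly\log N)\subseteq\Oh(\sqrt N\,T)$ over the polylogarithmically many checks. A union bound then makes every stored threshold and every accept/reject decision correct. Your objection that reducing the error to $1/\log N$ reintroduces a logarithmic factor applies only to the oracle called within the search. With this change you should also output an index and recompute its value reliably, which makes the final minimum over repetitions safe.

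Two further steps need arguments rather than citations.

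First, the rank lemma (rank $r$ becomes the threshold with probability $1/r$) relies on the search returning a uniformly random marked element. That holds for Grover with an exact oracle by symmetry. The robust search, however, weights each marked $i$ by the acceptance probabilities of its error-reduced checks across the recursion levels, and these differ between elements. Since the level-$k$ error is $2^{-\Omega(k)}$, the weights should lie within constant factors of one another. That would give $\Oh(1/r)$ and still $\sum_r \frac1r\sqrt{N/r}=\Oh(\sqrt N)$, but it must be shown.

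Second, you need a robust search for unknown $t$ with expected cost $\Oh(\sqrt{N/t})$ that never declares failure, so that a false negative merely costs time, as in Boyer--Brassard--H{\o}yer--Tapp. This again relies on the cheap high-confidence check of each returned candidate.

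The variable-time alternative you sketch does not obviously help, since standard variable-time amplitude amplification carries its own polylogarithmic overhead. Avoiding every such overhead in time complexity is exactly what the cited result supplies.
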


We briefly mention here the benefits of this new variation.
The quantum algorithms of \cite{ABIKPV19} (and consequently, our tradeoffs) call Grover's algorithm recursively, some of them to up to $\Oh(\log n)$ recursive depth.
Since the error probability of inputs is required to be reduced to $\Oh(1/c^n)$ for each of Grover's search applications, this introduces a factor of $n^{\Oh(d)}$ in the overall complexity (if $d$ is the depth of the recursion), which can be as large as $n^{\Oh(\log n)}$.
This doesn't affect the asymptotic complexity of the algorithms, since this factor is subexponential in $n$.
However, it can still be a substantial multiplicative factor.

Note that complexity-wise, $\Oh(c^n f(n)) = \Oh((c+\varepsilon)^n)$ for any $\varepsilon > 0$ if $f(n)$ is subexponential in $c$.
Hence, for example, complexity $\Oh(1.728^n)$ can be slightly deceptive, as the more accurate asymptotics are described by $n^{\Oh(\log n)} 1.727391...^n$.
Theorem \ref{thm:gro} gives an improved complexity and a clearer view: since now no error reduction is necessary, only a single additional $\Oh(n)$ factor comes up in the complexity from Grover's searches.
This lets us write our asymptotic complexities as $\widetilde\Oh(c^n)$, which highlights that the hidden factor is only polynomial in $n$.

\subsection{Fractalization}

Here we describe a useful general property of time-space tradeoffs:
\begin{lemma}[Fractalization] \label{thm:fract}
    For any divide \& conquer or permutation problem, a quantum time-space tradeoff $(\mathcal T, \mathcal S)$ implies a quantum $(\sqrt{2\mathcal T}, \sqrt{\mathcal S})$ tradeoff.
    All of the points $(\mathcal T', \mathcal S')$ iteratively obtained this way lie on the curve $\mathcal T' = 2/\mathcal S'^c$, where $c$ is the number such that $\mathcal T = 2/\mathcal S^c$.
\end{lemma}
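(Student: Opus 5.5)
The plan is to split the instance at the middle and use Grover search over the splitting set, running the given tradeoff algorithm on each half-size subproblem. Suppose algorithm $\mathcal A$ solves every instance of size $m$ of the problem class in $\widetilde\Oh(\mathcal T^m)$ time and $\widetilde\Oh(\mathcal S^m)$ QRAM space.

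For a divide \& conquer problem, use the recurrence \eqref{eq:dnq} with $k=n/2$:
\[
f([n]) = \min_{|X|=n/2} g\bigl(f([n]\setminus X), f(X)\bigr).
\]
Both $f(X)$ and $f([n]\setminus X)$ are values of the function on a ground set of size $n/2$. These subproblems are themselves divide \& conquer instances of size $n/2$, so we can restrict the recurrence to subsets of $X$.

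For permutation problems, use the analogous recurrence for $f(\varnothing,[n],\varnothing)$ with $|X|=n/2$. Its two terms are $f(\varnothing,X,Y)$ and $f(X,Y,\varnothing)$, which are permutation instances on $n/2$ free elements, with the fixed left and right parts treated as context. In the \textsc{Hypercube Path} view, these are path queries in two subcubes of dimension $n/2$.

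Apply Theorem \ref{thm:gro} to minimize over the $\binom{n}{n/2}\le 2^n$ choices of $X$. The inner routine runs $\mathcal A$ twice on instances of size $n/2$. This costs $\widetilde\Oh(\mathcal T^{n/2})$ time and $\widetilde\Oh(\mathcal S^{n/2})$ space, and $g$ is evaluated in $\Oh(\poly n)$ time. The total time is therefore
\[
\widetilde\Oh\bigl(\sqrt{2^n}\,\mathcal T^{n/2}\bigr)=\widetilde\Oh\bigl((\sqrt{2\mathcal T})^n\bigr).
\]

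The main obstacle is the space bound. The inner call is performed in superposition over $X$, so its QRAM contents depend on $X$. This is exactly why qRAM with quantum data and writes is needed, unless the original tradeoff uses only qROM; I will note this. The memory must be uncomputed after each inner call, and the $\widetilde\Oh(\mathcal S^{n/2})$ QRAM register can be reused across the two calls and across Grover iterations. The resulting space is $\widetilde\Oh(\sqrt{\mathcal S}^{\,n})$. Theorem \ref{thm:gro} tolerates bounded-error inner algorithms, so no extra error-reduction factors arise.

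Finally, the curve claim is a direct computation. If $\mathcal T = 2/\mathcal S^c$, then
\[
\sqrt{2\mathcal T}=\sqrt{4/\mathcal S^c}=2/(\sqrt{\mathcal S})^c,
\]
so the new point $(\sqrt{2\mathcal T},\sqrt{\mathcal S})$ lies on the same curve $\mathcal T'=2/\mathcal S'^c$. Induction on the number of iterations gives the statement for all iterates.
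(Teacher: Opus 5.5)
Your proposal is correct and is essentially the paper's proof: a single Grover search (Theorem~\ref{thm:gro}) over the $\binom{n}{n/2}$ middle splits, with the given $(\mathcal T,\mathcal S)$ algorithm run on both halves, yields $\widetilde\Oh(\sqrt{2\mathcal T}^{\,n})$ time and $\widetilde\Oh(\sqrt{\mathcal S}^{\,n})$ space, and the curve claim follows from $\sqrt{2\mathcal T}=2/\sqrt{\mathcal S}^{\,c}$. One side remark needs correcting: the inner memory contents depend on the superposed $X$, so quantum data (qRAM) is required even when the original tradeoff uses only qROM (the paper makes this point for its improved divide \& conquer tradeoff), and your ``unless'' clause should therefore be dropped.
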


\begin{proof}
    If the problem is of the divide \& conquer type, suppose we run a single Grover's search over the sets of size $n/2$ using Equation \ref{eq:dnq} and then apply the given tradeoff for those sets.
    In the case of a permutation problem, consider running Grover's search over all possible sets of size $n/2$ as the first half of the permutation, and then applying the given tradeoff to find the optimal permutation of each half.
    In both cases the time complexity is given by
    \begin{equation}
        \widetilde\Oh\left(\sqrt{\binom{n}{n/2}}\mathcal T^{n/2}\right) = \widetilde\Oh(\sqrt{2\mathcal T}^n)
    \end{equation}
    and the space complexity is only $\widetilde\Oh(\mathcal S^{n/2}) = \widetilde\Oh(\sqrt{\mathcal S}^n)$.

    The second part follows since if $2/\mathcal S^c = \mathcal T$, then $2/ \sqrt{\mathcal S}^c = \sqrt{2\mathcal T}$ is the same identity.
\end{proof}

This explains why the tradeoffs in Figures \ref{fig:dnq-short} and \ref{fig:hp-short} follow a fractal-like pattern.
Moreover, this shows why the coinciding part of the two tradeoffs of Theorems \ref{thm:dnq-opt} and \ref{thm:dnq-imp} (with $\mathcal S \in [1.438, 1.728]$) repeats for smaller values of $\mathcal S$, with horizontal plateaus in between, see Figure \ref{fig:dnq-short}: this happens because the improved tradeoff can be obtained by fractalizing this segment.

Classically, fractalization from $(\mathcal T, \mathcal S)$ implies a $(2\sqrt{\mathcal T}, \sqrt{\mathcal S})$ tradeoff.
Thus, as a side note, one can fractalize the classical time-space tradeoffs of \cite{KP10} to show that one can achieve a time-space product less than $4$ at almost all points $\mathcal S \in [1,2]$ instead of just the segment $\mathcal S \in [1.452,2]$.

\section{Divide \& Conquer Tradeoffs} \label{sec:dnq}

\subsection{Quantum Dynamic Programming}

Our tradeoffs are based on the original algorithm of \cite{ABIKPV19} for the \textsc{Travelling Salesman Problem}, which we retrace here.
Note that for the \textsc{Travelling Salesman Problem} the recurrences below include additional parameters, but the recurrences are essentially the same and the overall complexities stay identical, up to $\Oh(\poly n)$ factors, so we omit these details for the sake of simplicity.

\paragraph{Algorithm Description.} 
The algorithm performs in two parts, the classical precomputation and the quantum search stages.
For the precomputation, we use Equation \ref{eq:dnq} with $k=1$ to get the recurrent relation
\begin{equation}
    f(S) = \min_{x \in S} g(f(S \setminus \{x\}), f(\{x\})).
\end{equation}
Assuming $f(\{x\})$ is trivially computable for all $x \in [n]$, the precomputation computes the value of all $f(S)$ in order of increasing size of $S$ until $|S| \leq \alpha n$, using dynamic programming.
All of these values are then stored in qROM.

In the quantum search stage, we use Equation \ref{eq:dnq} with $k = |S|/2$ to get
\begin{equation}
    f(S) = \min_{\substack{X \subset S \\ |X| = |S|/2}} g(f(S \setminus X), f(X)).
\end{equation}
We start our algorithm with $S = [n]$ and we run Grover's search from Theorem \ref{thm:gro} over the sets $X$.
For each of these sets, we apply Grover's search recursively using the same recurrence.

The recursion continues until $|S| \leq 2\alpha n$, when we apply Grover's search to Equation \ref{eq:dnq} with $k=\alpha n$,
\begin{equation}
    f(S) = \min_{\substack{X \subset S \\ |X| = \alpha n}} g(f(S \setminus X), f(X)).
\end{equation}
Here the values of $f(S\setminus X)$ and $f(X)$ can be obtained from qROM in constant time.

\subsection{Optimization Tradeoff}

Varying the parameter $\alpha$ naturally gives a time-space tradeoff.
By analyzing the time and space complexity, we obtain the following result:

\dnqopt*

\begin{proof}
To compute $f(S)$ for a single $S$ during the precomputation stage, we need to examine all $x \in S$, which are $\Oh(n)$.
The total number of sets examined during the precomputation is equal to $\binom{n}{\leq \alpha n}$.
Thus the time and space complexity at the first stage are equal to
\begin{equation}
    T_P = \widetilde\Oh\left( \binom{n}{\leq \alpha n} \right) = \widetilde\Oh\left( 2^{\be(\alpha)n} \right)
\end{equation}
by Theorem \ref{thm:be}.

In the second stage, since $\alpha \geq 1/2^{k+1}$, then $k$ is the depth of the recursion (i.e.~for the last level, $|S| = n/2^k$).
If $|S| = n/2^i$ for $0 \leq i < k$, then Grover's algorithm searches over sets of size $n/2^{i+1}$.
By Theorems \ref{thm:gro} and \ref{thm:be}, the number of iterations of Grover's search on the $i$-th level then is
\begin{equation}
    \Oh\left(\sqrt{\binom{n/2^i}{n/2^{i+1}}}\right) = \Oh\left(\sqrt{2^{\be(1/2)n/2^i}}\right) = \Oh(2^{n/2^{i+1}}).
\end{equation}
At the last recursive call, the number of options to examine is $\binom{n/2^k}{\alpha n}$.
Thus the number of Grover's iterations is
\begin{equation}
    \Oh\left(\sqrt{\binom{n/2^k}{\alpha n}}\right) = \Oh\left( 2^{\be(\alpha 2^k)n/2^{k+1}}\right).
\end{equation}

The overall time complexity of the search part then is equal to
\begin{equation}
    T_Q = \widetilde\Oh\left( 2^{n \left(\sum_{i=0}^{k-1} \frac{1}{2^{i+1}}+\frac{\be(\alpha 2^k)}{2^{k+1}}\right)}\right) = \widetilde\Oh\left(2^{n\left(1-\frac{2-\be(2^k\alpha)}{2^{k+1}}\right)}\right).
\end{equation}
Note that the $\Oh(\poly(n))$ factor here comes from two places: first, only a single $\Oh(n)$ overhead factor from the topmost Grover's search; secondly, a factor of $\Oh(c^{\log n}) = \Oh(\poly n)$ comes from the $\Oh(\log n)$ recursive applications of Grover's search.

The total space complexity of the search part is only $\Oh(\poly n)$, thus negligible compared to the precomputation (moreover, it doesn't use QRAM).

Altogether, the total space complexity is equal to $\widetilde\Oh(\mathcal S^n) = T_P$ and the time complexity of the algorithm is equal to $\widetilde\Oh(\mathcal T^n) = T_P+T_Q = \Oh(\max(T_P, T_Q))$.
By balancing the two terms, the smallest time complexity is obtained at $\alpha \approx 0.236$ and achieves $\mathcal T = \mathcal S \approx 1.728^n$ (this corresponds to the algorithm of \cite{ABIKPV19}).

For greater $\alpha$, we have $\mathcal S = \mathcal T > 1.728$ which is unfavorable.
For other $\alpha < 0.236$, this gives a time-space tradeoff with
\begin{equation}
    \mathcal S = 2^{\be(\alpha)}, \quad \mathcal T = 2^{1-\frac{2-\be(2^k\alpha)}{2^{k+1}}}.
\end{equation}

Finally, it is also clear the the algorithm requires only qROM memory.
\end{proof}

\subsection{Improved Tradeoff}

The key idea for the improved tradeoff is to use the Gurevich and Shelah approach \cite{GS87}.
Classically, this algorithm runs the divide \& conquer strategy recursively until the size of the set is $s=n/2^k$, at which point the answer is computed using exponential dynamic programming.
The time complexity can be calculated to be $\widetilde\Oh(4^n2^{-s})$ and the space complexity is $\Oh(2^s)$, since the memory is reused for the final subproblems.
Our quantum tradeoff employs the same strategy, except (a) applying Grover's search in the divide \& conquer process and (b) running the quantum algorithm from Theorem \ref{thm:dnq-opt} for the subproblems at the end of the recursion.

\paragraph{Algorithm Description.}

The algorithm is given two parameters, $\alpha \in [0,1/2]$ and $\beta \in [0, 1]$.
Suppose that the algorithm is given a set $S$ and has to compute $f(S)$:
\begin{enumerate}
    \item \label{itm:stp1} If $|S| > 2\beta n$, then it uses Grover's search over sets of size $k=|S|/2$ by Equation \ref{eq:dnq} and calls itself recursively.
    \item \label{itm:stp2} If $\beta n < |S| \leq 2\beta n$, it applies Grover's search over sets of size $k=\beta n$ by Equation \ref{eq:dnq} and calls itself recursively.
    \item \label{itm:stp3} If $|S| \leq \beta n$, then the algorithm calculates $f(S)$ using the algorithm of Theorem \ref{thm:dnq-opt} with parameter $\alpha$.
\end{enumerate}

Note that since the algorithm of Theorem \ref{thm:dnq-opt} is called inside the application of Grover's search, we now need qRAM instead of qROM to be able to write in the same memory at different quantum branches of the computation.

\paragraph{Complexity.}

By analyzing the asymptotic complexity, we obtain the following estimates:
\begin{theorem}
    Suppose that $k \geq 1$, $m \geq 0$ are integers and $\alpha \in [1/2^{k+1},1/2^k]$, $\beta \in [1/2^{m+1},1/2^m]$ real numbers.
    Let $\R(\rho,\ell) := 1-\frac{2-\be(2^\ell\rho)}{2^{\ell+1}}$.
    There exists a quantum algorithm for the divide \& conquer problems with time and space complexities $\widetilde\Oh(\mathcal T^n)$, $\widetilde\Oh(\mathcal S^n)$, where
    \begin{equation}
        \mathcal T = 2^{\R(\beta,m) + \beta\max\{\R(\alpha,k), \be(\alpha)\}}, \quad \mathcal S = 2^{\beta\be(\alpha)}.
    \end{equation}
\end{theorem}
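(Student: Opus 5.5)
The plan is to analyze the three steps of the algorithm separately, mirroring the proof of Theorem \ref{thm:dnq-opt}, and multiply the costs along the recursion tree.

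\textbf{Outer search cost.} Steps \ref{itm:stp1} and \ref{itm:stp2} form the same Grover recursion as the search stage of Theorem \ref{thm:dnq-opt}, with the leaf threshold $\alpha$ replaced by $\beta$. Since $\beta \in [1/2^{m+1},1/2^m]$, there are $m$ halving levels, taking $|S|$ from $n$ down to $n/2^m \le 2\beta n$. On level $i$ the search costs $\Oh(2^{n/2^{i+1}})$ iterations by Theorem \ref{thm:be}. The final level of step \ref{itm:stp2} costs $\Oh(\sqrt{\binom{n/2^m}{\beta n}}) = \Oh(2^{\be(2^m\beta)n/2^{m+1}})$. Summing exponents exactly as before, the total number of leaf calls along a Grover path is
\[
\widetilde\Oh\bigl(2^{\R(\beta,m)n}\bigr).
\]
By Theorem \ref{thm:gro}, each level contributes only a multiplicative $\Oh(\log N + T)$ overhead, and the depth is $\Oh(\log n)$. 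So the cost is this quantity times the cost of one leaf evaluation, up to $\poly n$ factors.

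\textbf{Leaf cost.} At step \ref{itm:stp3} we have $|S| \le \beta n$, and both subsets produced at step \ref{itm:stp2} have size at most $\beta n$. We then run the algorithm of Theorem \ref{thm:dnq-opt} on a ground set of size $n' = \beta n$, with threshold $\alpha n'$ and recursion depth $k$. Its precomputation costs $2^{\be(\alpha)n'}$ time and space. Its search costs $2^{\R(\alpha,k)n'}$. The leaf cost is therefore
\[
2^{\beta\max\{\R(\alpha,k),\be(\alpha)\}n}.
\]
Multiplying by the outer cost gives the claimed $\mathcal T$. One technical point needs care: here $\alpha$ is relative to the subproblem size, so I would state the leaf call as Theorem \ref{thm:dnq-opt} applied with $n'=\beta n$. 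The edge case $|S|<\beta n$ only decreases the cost, since the binomial sums are monotone.

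\textbf{Space and reversibility.} This is the main obstacle. The precomputation inside a leaf runs inside superpositions over the outer Grover branches. Each branch must write its own table, computed from its own $S$, into the same qRAM of size $\widetilde\Oh(2^{\beta\be(\alpha)n})$. Afterwards the table must be uncomputed so the memory can be reused by the next Grover iteration and by the sibling subproblem. I would argue this by running the classical DP reversibly, with address computations controlled on the register holding $S$ and using the read-write qRAM gate. I would then uncompute it by running the same circuit in reverse after reading off $f(S)$. This at most doubles the time, and the only QRAM in use at any moment is the single current leaf table. The workspace for Grover registers is $\Oh(\poly n)$ per level over $\Oh(\log n)$ levels, so the total space is $\widetilde\Oh(2^{\beta\be(\alpha)n})$, i.e.\ $\mathcal S = 2^{\beta\be(\alpha)}$.

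The delicate part is justifying that the erroneous-oracle Grover of Theorem \ref{thm:gro} accepts such a subroutine, which leaves garbage only if uncomputation fails. I would treat this via its bounded-error hypothesis.
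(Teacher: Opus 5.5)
Your proposal is correct and follows the paper's argument. Steps 1--2 are the search stage of Theorem~\ref{thm:dnq-opt} with threshold $\beta$, costing $2^{\R(\beta,m)n}$, and Step 3 is Theorem~\ref{thm:dnq-opt} on a ground set of size $\beta n$; the costs multiply, and the space is that of a single leaf table. Your explicit reversible precomputation and uncomputation in qRAM spells out what the paper only notes in passing when it says qRAM rather than qROM is needed.
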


\begin{proof}
    The total time complexity of the Steps \ref{itm:stp1} and \ref{itm:stp2} has already been analyzed in Theorem \ref{thm:dnq-opt}, as in the second stage of that algorithm we run Grover's search with divide \& conquer until sets of size $\alpha n$.
    In this case, it is equal to $\widetilde\Oh\left(2^{n\R(\beta,m)}\right)$.
    The algorithm requires only $\Oh(\poly(n))$ memory because of Grover's searches.

    Step \ref{itm:stp3} requires $\widetilde\Oh(2^{\beta n\max\{\R(\alpha,k), \be(\alpha)\}})$ time and $\widetilde\Oh(2^{n\beta\be(\alpha)})$ space by Theorem \ref{thm:dnq-opt}.
    By multiplying the obtained complexities, the statement follows.
\end{proof}

By numerically analyzing this result for different $\alpha$ and $\beta$, we conclude that competitive tradeoffs are obtained only when $\beta = 1/2^m$ for some $m \geq 0$.
In that case
\begin{equation}
    \R(\beta,m) = 1-\frac{2-\be(2^m\cdot 1/2^m)}{2^{m+1}} = 1-\beta
\end{equation}
and the time complexity simplifies to
\begin{equation}
    \mathcal T = 2^{\max\left\{1-\beta\frac{2-\be(2^k\alpha)}{2^{k+1}},1-\beta + \beta\be(\alpha)\right\}}.
\end{equation}
The Pareto-optimal points give the teal curve in Figure \ref{fig:dnq-short}.

This line can also be obtained by fractalization, which also implies the following result, giving easy-to-compute lower and upper bounds for the tradeoff:
\dnqimp*

\begin{proof}
    By Lemma \ref{thm:fract}, the original $(1.728, 1.728)$ algorithm implies a $(\sqrt{2\cdot 1.728}, \sqrt{1.728}) \approx (1.859, 1.315)$ tradeoff, and this one implies $(1.928, 1.147)$ tradeoff, and so on.
    The described points satisfy some curve $2/\mathcal S^c = \mathcal T$ by Lemma \ref{thm:fract}.
    By solving $2/1.727391^c=1.727391$, we obtain $c \approx 0.268$.

    Similarly we can examine the leftmost point at which the two curves in Figure \ref{fig:dnq-short} coincide.
    We can solve this numerically and find that it happens at $\mathcal S \approx 1.438$, with $\mathcal T \approx 1.859$.
    By fractalization, solving $2/1.438^c = 1.859$ gives a curve $2/\mathcal S^{0.201}$ on which lie all the tradeoff points that can be obtained from $(1.859, 1.438)$ as described in the beginning of the proof.
    We can then verify numerically that this gives an upper bound on the improved time-space tradeoff.
\end{proof}

\section{Permutation Problem Tradeoffs} \label{sec:hp}

In this section, we investigate time-space tradeoffs for permutation problems through two frameworks. The first is the \textsc{Hypercube Path} problem introduced by \cite{ABIKPV19}. The second is the pairwise scheme of \cite{KP10}.  

\subsection{Hypercube Path}\label{ssec:hp}

We revisit the \textsc{Hypercube Path} problem, together with its original quantum algorithm. We analyze its performance under restricted quantum memory, obtaining new upper bounds on the achievable time-space tradeoffs. In addition, we introduce a dynamic programming method that computes the optimal running time for any prescribed space budget, and we show and discuss the results of this tradeoff optimization.

\paragraph{Problem Definition.} In the \textsc{Hypercube Path} problem, we are given query access to a subgraph $G$ of the directed $n$-dimensional hypercube $Q_n$.
The vertices of $Q_n$ are the bit strings in $\{0,1\}^n$ and a directed edge $x \rightarrow y$ exists if $y$ can be obtained from $x$ by flipping a single bit from $0$ to $1$.
The goal is to determine whether there exists a directed path from the vertex $0^n$ to $1^n$.
The graph $G$ is given by an oracle that, given two vertices $x$ and $y$, returns whether there is an edge $x \to y$.

\paragraph{Algorithm Description.} Let $k\in \mathbb N$ be a parameter of the algorithm and $0 < \alpha_1 < \dots < \alpha_k < \alpha_{k+1} = 1/2$ be
constants to be defined later. Let $A_i$ denote the set of vertices of $G$ with Hamming weight $\lfloor \alpha_i n \rfloor$. The algorithm proceeds in three main stages. In the first stage, the algorithm checks if there exists an index $i$ such that $\lfloor \alpha_i n\rfloor =\lfloor \alpha_{i+1} n\rfloor $. This means that the dimension of the hypercube is below some small threshold and the solution is computed classically via dynamic programming.

In the second stage, for each vertex $x$ with Hamming weight at most $\lfloor \alpha_1 n\rfloor$ the algorithm computes whether it is reachable from $0^n$.
It uses dynamic programming and stores this reachability information for all vertices $x \in A_1$ in QRAM.
It then analogously computes whether the vertex $1^n$ is reachable from each vertex in the layer with Hamming weight $n-\lfloor \alpha_1 n\rfloor$.

In the third stage, the algorithm executes Grover's search (see \Cref{thm:gro}) over all vertices $x$ in the middle layer $A_{k+1}$ of the hypercube. For each candidate $x$, it checks if there exists a path from $0^n$ to $x$ (the existence of a path from $x$ to $1^n$ is checked analogously).
The process for verifying this condition proceeds recursively, starting from a target vertex in the highest layer $A_{k+1}$ and decomposing the path step-by-step until it reaches the precomputed base layer $A_1$.

Consider a vertex $x$ in an arbitrary layer $A_i$ and denote by $x \preceq y$ that $x_i \leq y_i$ for all $i \in [n]$.
To determine if there is a path from $0^n$ to $x$, the algorithm must find an intermediate vertex $y$ in the preceding layer $A_{i-1}$ such that: $y\preceq x$, there exists a path from $0^n$ to $y$, and there exists a path from $y$ to $x$ in $G$ in the subcube $\{z \in \{0,1\}^n \mid y \preceq z \preceq x\}$.
This search is implemented quantumly using Grover's search over all potential $y\in A_{i-1}$ with $y\preceq x$.

The recursion unfolds as follows:
\begin{itemize}
    \item \textbf{Base case ($i=1$).} For a vertex $x \in A_1$, the answer is directely retrieved from the classically precomputed lookup table.
    \item \textbf{Recursive step ($i>1$).} For a vertex $x \in A_i$, the algorithm uses Grover's search to find a valid $y \in A_{i-1}$. For each candidate $y$, it makes two recursive calls. The first checks whether $0^n$ is connected to $y$. The second calls the main \textsc{Hypercube Path} algorithm on the subcube between $y$ and $x$.

\end{itemize}

\paragraph{Complexity.} The exponential part of the running time can be expressed as $\widetilde\Oh(\gamma^{n})$.
Let $T_i$ represent the exponential part of the running time for checking the reachability from $0^n$ to a vertex in layer $A_{i}$, with $T_1 = 1$. The recurrence is given by:
\begin{equation}\label{eq:hp-recursive}
T_{i+1} = \sqrt{\binom{\alpha_{i+1}n}{\alpha_i n}} \left( \gamma^{(\alpha_{i+1}-\alpha_i)n} + T_i \right),
\end{equation}
The total (exponential) running time for the entire algorithm is the sum of the classical preprocessing time and the quantum search time:
\begin{equation}\label{eq:hp-balance}
    \gamma^n = \binom{n}{\leq \alpha_1n} + \sqrt{\binom{n}{n/2}}T_{k+1},
\end{equation}
The optimal value of the constant $\gamma$ is determined by balancing the dominant terms in the total running time, which leads to a system of equations. For $k=6$, the solution is $\gamma \approx 2^{0.861483} \approx 1.816905$. Increasing $k$ beyond 6 results in negligible improvement, indicating convergence. The overall time and space complexity is therefore $\widetilde\Oh(1.817^n)$.

\subsection{Optimization Tradeoff}\label{sssec:hp-tradeoff}

The original algorithm for the \textsc{Hypercube Path} problem \cite{ABIKPV19} assumes an unlimited amount of QRAM, capable of storing an arbitrary number of table entries. Our objective is to characterize the time-space tradeoff of the \textsc{Hypercube Path} algorithm, by determining the best achievable running time for any amount $\mathcal S^n$ of QRAM space.

In contrast to the original algorithm, where the parameters $\alpha_i$ can be found by solving a comparatively simple system of equations, our constrained-memory setting requires a far more complex approach.
Suppose that our memory budget is given by $\widetilde\Oh(\mathcal S^n)$.
Since the maximum depth of the recursion is $\Oh(\log n)$ and the number of recursive calls at each level of recursion is constant, the total number of recursive subproblems solved by the algorithm is $\Oh(\poly n)$.
The allowed memory budget at each recursive subproblem is then still $\widetilde\Oh(\mathcal S^n/\poly n) = \widetilde\Oh(\mathcal S^n)$.

Therefore, the dimension of the subcubes decreases in the recursion, while the memory budget stays essentially the same. 
Thus each recursive call operates on a subproblem that may require a distinct set of optimal parameters $\alpha_i$, independent of those in the parent call. 
This recursive dependency forces us to optimize over several parameters at every level of the recursion, rendering a direct numerical approach computationally intractable.
Consequently, solving this generalized time-space tradeoff problem is fundamentally more complex than the parameter optimization for the original algorithm, which assumed unbounded space.

We analyze such a tradeoff for all values of $k=1,\dots,6$, where $k$ denotes the number of layers used to partition the hypercube. For each such $k$ and for any value of $\mathcal S^n$, we determine the time complexity $\mathcal T^n$ of solving the \textsc{Hypercube Path} problem.
To formalize this, we introduce a parameter $c\in[0,1]$ representing the relative dimension of the hypercube: at the beginning of the computation we have $c=1$, and at each recursive step the dimension of the hypercube shrinks, decreasing the value of $c$. We denote by $\mathcal T(c)^n$ the time complexity of solving the problem on a hypercube of relative dimension $c$, under fixed values of $k$ and $\mathcal S$. In particular, $\mathcal T(1)^n=\mathcal T^n$, so that $\mathcal T(c)^n$ will be used when analyzing intermediate subproblems, while $\mathcal T^n$ refers to the overall time complexity of the algorithm. 

We first formalize the memory constraint and the associated objective function to be minimized. We then detail our methodology for computing the optimal value $\mathcal T^n$, for all $\mathcal S^n$ and $k$. 

\paragraph{Single Intermediate Layer Analysis.}
In the case $k=1$, the hypercube is partitioned into two layers: $A_1$ consisting of vertices of Hamming weight $\lfloor \alpha_1 c n \rfloor$, and $A_2$ consisting of vertices of weight $\lfloor \alpha_2 c n \rfloor$ with $\alpha_2 = \frac{1}{2}$ (see \Cref{fig:hp-ill-2-l}).     Let $\lambda_c$ be the parameter determining the maximum allowed size of the DP table computed during preprocessing. This parameter must be chosen to satisfy the condition $\mathcal{S}^{n} = \binom{cn}{\leq \lambda_c cn}$. Subsequently, the parameter $\alpha_1$ is chosen to satisfy the inequality $\alpha_1 \leq \lambda_c$.
The time complexity $\mathcal{T}^n$ is computed as follows.

\textbf{Base case ($\alpha_1 = \frac 1 2$).} The entire dynamic programming table fits into the memory, so the solution can be directly computed via the classical algorithm, giving:
\begin{equation}\label{eq:hp-k-1-bs}
\mathcal T(c)^n = 2^{cn}.
\end{equation}

\textbf{Recursive case ($\alpha_1 < \frac 1 2$).} In this case, the available memory $\mathcal S^{cn}$ is insufficient for storing the whole dynamic programming table, necessitating a recursive approach. While the algorithm can precompute up to $\mathcal S^{cn}$ entries, the optimal strategy may not use the entire memory budget. The optimal time complexity is therefore given by minimizing over all possible choices of $\alpha_1\in (0,\lambda_c]$ as follows.

\begin{equation}\label{eq:hp-k-1-rc}
\mathcal T(c)^n =\min_{\alpha_1\in (0,\lambda_c]}\max \begin{cases}
\binom{cn}{\leq\alpha_1 cn}, \\
\sqrt{\binom{cn}{\alpha_2 cn}} \cdot \sqrt{\binom{\alpha_2cn}{\alpha_1 cn}} \cdot \mathcal T\left((\alpha_2 - \alpha_1)  c\right)^n.
\end{cases} 
\end{equation}

\noindent The first term in the $\max$ function represents the classical preprocessing time. The second term describes the quantum recursive time, which is a product of three components: (i) $\sqrt{\binom{cn}{\alpha_2 cn}}$ for Grover's search over the middle layer $A_2$; (ii) $\sqrt{\binom{\alpha_2cn}{\alpha_1 cn}}$ for Grover's search over the predecessors in layer $A_1$; and  (iii) $\mathcal T\left((\alpha_2 - \alpha_1)  c\right)^n$ for the recursive call on a subcube of relative dimension $(\alpha_2-\alpha_1)c$.
For a fixed $\alpha_1$, the time complexity is determined by the slower of the two phases: preprocessing or recursion. The overall optimal time $\mathcal T(c)^n$ is found by minimizing this value over all possible $\alpha_1$.

The recurrence for the recursive cases can be expressed asymptotically using the binary entropy function to approximate the binomial coefficients, as stated in \Cref{thm:be}. This entropy-based formulation will be used for all subsequent cases $k = 2, \dots, 6$. 
Moreover, since the time and space complexities grow exponentially with $n$, we analyze their base-$2$ exponents for simplicity. Defining the time exponent as $\tau (c) =\log_2 \mathcal T(c)$, and noting its independence from the absolute input dimension, we omit explicit dependence on $n$ hereafter.
The recurrence for $k=1$ can be rewritten as:

\begin{equation}\label{eq:hp-k-1-rc-e}
\tau(c) =\min_{\alpha_1\in (0,\lambda_c]}\max \begin{cases}
\be(\alpha_1) c, \\
\frac{1}{2}(\be(\alpha_2) c + \be(\frac{\alpha_1}{\alpha_2}) \alpha_2c) + \tau\left((\alpha_2 - \alpha_1)  c\right).
\end{cases} 
\end{equation}

\paragraph{Two Intermediate Layer Analysis.}
We now extend our analysis to the case $k=2$, before generalizing to arbitrary $k$. In this case, the hypercube is partitioned into three layers: $A_1$, $A_2$, and $A_3$, containing vertices of Hamming weight $\lfloor \alpha_1 cn\rfloor$, $\lfloor \alpha_2cn\rfloor$, and $\lfloor \alpha_3 cn\rfloor$, respectively, where by definition $\alpha_3=\frac{1}{2}$ (see \Cref{fig:hp-ill-3-l}). The layer $A_1$ determines the size of the DP table computed during the classical preprocessing step. Layer $A_3$ is the middle layer, and $A_2$ is an intermediate layer between $A_1$ and $A_3$. The time complexity $\mathcal T(c)$ is computed as follows.

\textbf{Base case ($\alpha_1 = \frac 1 2$).} $\tau(c) = c$ similarly as before.

\textbf{Recursive case ($\alpha_1 < \frac 1 2$).} The available memory is insufficient for a direct solution, necessitating a recursive approach. The optimal strategy may not use the entire memory budget, and the optimal value for the intermediate parameter $\alpha_2$ is not known a priori, as it can range from $\alpha_1$ to $\alpha_3$. The optimal time complexity is therefore given by minimizing over all possible choices of $\alpha_1 \in (0, \lambda_c]$ and $\alpha_2 \in (\alpha_1, \alpha_3)$. For succinctness, we denote this double minimization by $\displaystyle \min_{\alpha_1, \alpha_2}=\min_{\alpha_1\in(0,\lambda_c]}\min_{\alpha_2\in(\alpha_1,\alpha_3)}$.

\begin{equation}\label{eq:hp-k-2-rc-e}
\tau(c) = \min_{\alpha_1,\alpha_2}\max \begin{cases}
\be(\alpha_1) c, \\
\frac{1}{2}(\be(\alpha_3) c + \be(\frac{\alpha_2}{\alpha_3}) \alpha_3c) + \tau\left((\alpha_3 - \alpha_2)  c\right), \\
\frac{1}{2}(\be(\alpha_3) c + \be(\frac{\alpha_2}{\alpha_3}) \alpha_3c + \be(\frac{\alpha_1}{\alpha_2}) \alpha_2 c) + \tau\left((\alpha_2 - \alpha_1)  c\right).
\end{cases}
\end{equation}

\noindent The first term in the $\max$ function represents the classical preprocessing time. The second term describes the time complexity for the recursive procedure that checks if there exists a path from a vertex in $A_2$ to a vertex in $A_3$. It comprises: Grover's search over $A_3$ ($\frac{1}{2}\be(\alpha_3) c$); Grover's search over predecessors in $A_2$ ($\frac{1}{2}\be(\frac{\alpha_2}{\alpha_3}) \alpha_3 c$); and the recursive call on the subcube between $A_2$ and $A_3$ ($\tau((\alpha_3 - \alpha_2) c)$). The third term corresponds to time complexity for checking if there exists a path from a vertex in $A_1$ to a vertex in $A_2$. It includes the previous two Grover's terms, plus the Grover's search over predecessors in $A_1$ ($\frac{1}{2}\be(\frac{\alpha_1}{\alpha_2}) \alpha_2 c$) and the recursive call on the subcube between $A_1$ and $A_2$ ($\tau((\alpha_2 - \alpha_1) c)$).

\begin{figure}[tb!]
%\centering
\begin{subfigure}[t]{0.475\textwidth}
\includegraphics[page=1,width=0.9\textwidth]{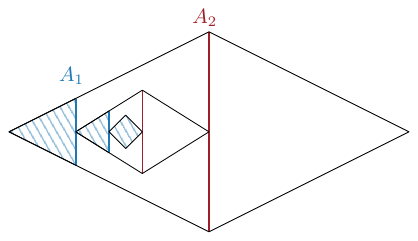}
\centering 
\caption{A $2$-layer decomposition of the hypercube.}
\label{fig:hp-ill-2-l}
    \end{subfigure}
\begin{subfigure}[t]{0.475\textwidth}
        \centering
\includegraphics[page=2,width=0.9\textwidth]{hypercube.pdf} 
\centering 
        \caption{A $3$-layer decomposition of the hypercube.}
\label{fig:hp-ill-3-l}
    \end{subfigure}
\caption{Illustration of the recursive hypercube layering strategy. \Cref{fig:hp-ill-2-l} depicts the case for $k=2$ layers. The initial preprocessing layer $A_1$ is highlighted in blue. Its size, determined by parameter $\alpha_1$, is constrained by the available memory, represented by the blue striped region. The middle layer $A_2$ is highlighted in red. The diagram shows three levels of recursion. At the final level, the subcube is small enough that its entire solution fits within the memory budget and can be computed classically. \Cref{fig:hp-ill-3-l} illustrates the analogous structure for $k=3$ layers. The intermediate layer $A_2$ is highlighted in green.}
\label{fig:hp-illustration}
\end{figure}

\paragraph{General Case Analysis.}
We now generalize our analysis to an arbitrary number of layers $k$. In the $k$-layer configuration, we have layers $A_1, A_2, \dots, A_{k+1}$, where the layer $A_1$ determines the size of the DP table computed in the preprocessing step, $A_{k+1}$ is the middle layer (with $\alpha_{k+1}=1/2$), and all others are intermediate layers. The time complexity $\mathcal T(c)$ is computed as follows.

\textbf{Base case ($\alpha_1 = \frac 1 2$).} $\tau(c) = c$ similarly as before.

\textbf{Recursive case ($\alpha_1 < \frac 1 2$).} The available memory is insufficient for a direct solution, necessitating a recursive approach. The optimal strategy may not use the entire memory budget, and the optimal values for the parameters $\alpha_1,\alpha_2, \dots, \alpha_k$ are not known a priori. The optimal time complexity is therefore given by minimizing over all valid sequences of parameters $\alpha_1, \dots, \alpha_k$ satisfying the chain $0 < \alpha_1 < \alpha_2 < \dots < \alpha_k < \alpha_{k+1}$. For succinctness, we denote this multiple minimization by $$ \min_{\alpha_1,\dots, \alpha_k}=\min_{\alpha_1\in(0,\lambda_c]}\min_{\alpha_2\in(\alpha_1,\alpha_{k+1})}\dots \min_{\alpha_k\in(\alpha_{k-1},\alpha_{k+1})}.$$
The recurrence relation is given by:
\begin{equation}\label{eq:hp-k-gen-rc-e}
\resizebox{0.91\textwidth}{!}{%
$
\tau(c) = \min_{\alpha_1,\dots,\alpha_k}\max \begin{cases}
\be(\alpha_1) c, \\
\frac{1}{2}\Big(\be(\alpha_{k+1}) c+\sum_{j=i}^{k} \be\Big(\frac{\alpha_{j}}{\alpha_{j+1}}\Big) \alpha_{j+1}c\Big)+
\tau((\alpha_{i+1}-\alpha_i) c), & \text{for all } i\in[1,k].\\
\end{cases}
$
}
\end{equation}
\noindent The terms in the $\max$ function generalize the structure from the $k=1$ and $k=2$ cases. The first term represents the exponent of the classical preprocessing time. The inner maximization over $i$ selects the most costly recursive path, where each path corresponds to recursing at a different layer $i$. For a fixed path $i$, the terms represent:  Grover's search over the middle layer $A_{k+1}$; the sum of Grover's search exponents over the predecessors in each subsequent layer down to $A_i$; and the recursive call on the subcube between layers $A_i$ and $A_{i+1}$.

Our analysis of the time and space complexity of the \textsc{Hypercube Path} algorithm for $k = 1, \dots, 6$ yields a family of tradeoff curves $\mathcal T$, parameterized by the memory usage $\mathcal S $. These curves quantify the optimal achievable time complexity under a constrained quantum memory budget. The results of this computation are illustrated in \Cref{fig:hp-levels}.

\begin{figure}[tb!]
\begin{center}
\input{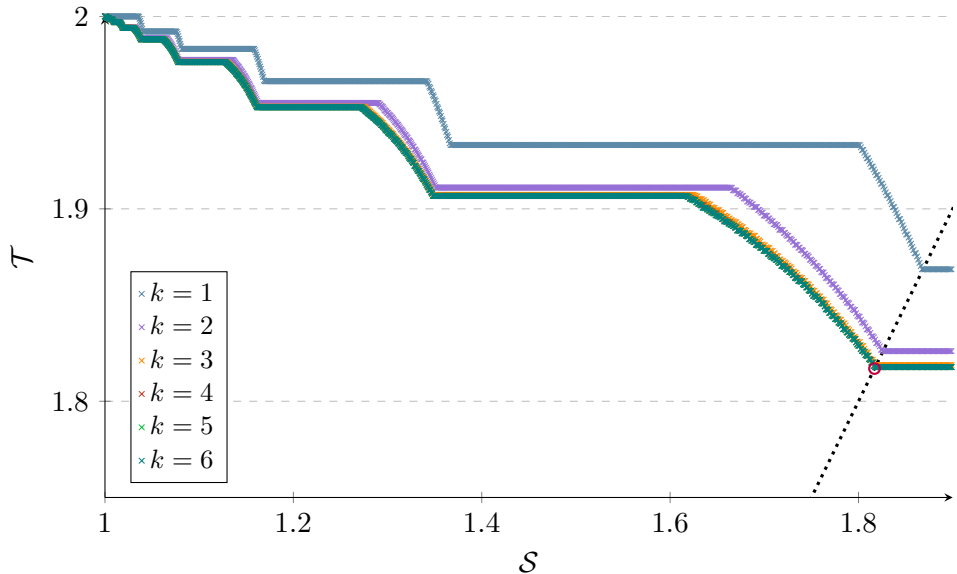}
\end{center}
\caption{Quantum time-space tradeoffs for the \textsc{Hypercube Path} problem. A point at coordinates $(\mathcal{S}, \mathcal{T})$ denotes an algorithm with time complexity $\widetilde{\Oh}(\mathcal{T}^n)$ and QRAM space complexity $\widetilde{\Oh}(\mathcal{S}^n)$.
Each colored curve corresponds to a different recursive strategy parameterized by $k$, the number of layers split per recursive call. Higher values of $k$ yield better tradeoffs.
The dotted line shows the function $\mathcal T = \mathcal S$, giving a barrier to improving the tradeoffs after $\mathcal S \approx 1.817$, which corresponds to the quantum \textsc{Hypercube Path} dynamic programming algorithm of \cite{ABIKPV19} (red circle).}
\label{fig:hp-levels}
\end{figure}

Consistent with the original algorithm, the curves for $k \geq 6$ tend to converge, as further increasing the number of layers does not yield a significant improvement in time complexity.
\Cref{tb:complexities} presents the resulting optimal time complexities for a representative sample of $\mathcal S$ values across all $k=1,\dots,6$.
Note that there is a small discrepancy between the constant $1.816905$... obtained in \cite{ABIKPV19} and our calculations in which we obtain $1.817776$...: this is due to the accumulation of small imprecisions in our calculations.

\begin{table}[tb!]
\centering
\begin{tabular}{ccccccc}

\multicolumn{1}{c||}{} & \multicolumn{1}{c|}{$k=1$} & \multicolumn{1}{c|}{$k=2$} & \multicolumn{1}{c|}{$k=3$} & \multicolumn{1}{c|}{$k=4$} & \multicolumn{1}{c|}{$k=5$} & $k=6$ \\ \hline \hline
\multicolumn{1}{c||}{$\mathcal S =1.0$}              & \multicolumn{1}{c|}{$2$} & \multicolumn{1}{c|}{$2$} & \multicolumn{1}{c|}{$2$} & \multicolumn{1}{c|}{$2$} & \multicolumn{1}{c|}{$2$} & $2$ \\ \hline
\multicolumn{1}{c||}{$\mathcal S =1.2$}              & \multicolumn{1}{c|}{$1.966319$} & \multicolumn{1}{c|}{$1.955016$} & \multicolumn{1}{c|}{$1.953075$} & \multicolumn{1}{c|}{$1.952799$} & \multicolumn{1}{c|}{$1.952799$} & $1.952799$ \\ \hline
\multicolumn{1}{c||}{$\mathcal S =1.4$}              & \multicolumn{1}{c|}{$1.933180$} & \multicolumn{1}{c|}{$1.911044$} & \multicolumn{1}{c|}{$1.907250$} & \multicolumn{1}{c|}{$1.906712$} & \multicolumn{1}{c|}{$1.906712$} & $1.906712$ \\ \hline
\multicolumn{1}{c||}{$\mathcal S =1.6$}              & \multicolumn{1}{c|}{$1.933180$} & \multicolumn{1}{c|}{$1.911044$} & \multicolumn{1}{c|}{$1.907250$} & \multicolumn{1}{c|}{$1.906712$} & \multicolumn{1}{c|}{$1.906712$} & $1.906712$ \\ \hline
\multicolumn{1}{c||}{$\mathcal S =1.8$}              & \multicolumn{1}{c|}{$1.931984$} & \multicolumn{1}{c|}{$1.843999$} & \multicolumn{1}{c|}{$1.830741$} & \multicolumn{1}{c|}{$1.828918$} & \multicolumn{1}{c|}{$1.828918$} & $1.828918$ \\ \hline
\multicolumn{1}{c||}{$\mathcal S = \mathcal T$}              & \multicolumn{1}{c|}{$1.868583$} & \multicolumn{1}{c|}{$1.826044$} & \multicolumn{1}{c|}{$1.818802$} & \multicolumn{1}{c|}{$1.817776$} & \multicolumn{1}{c|}{$1.817776$} & $1.817776$ \\
\end{tabular}
\caption{Representative optimal time complexities illustrating the time-space tradeoff for selected values of the memory parameter $\mathcal S$ across different numbers of layers $k=1,\dots,6$.
The last line shows the smallest time complexity for the quantum \textsc{Hypercube Path} algorithm with $k$ layers.}
\label{tb:complexities}
\end{table}

This approach leads to the following result.

\hpopt*

\begin{proof}
    We can see that fractalization is built-in in the \textsc{Hypercube Path} tradeoff in the following way.
    It conforms to the described approach as for the top recursive level we can take $\alpha_1 = \ldots = \alpha_k = 0$ and $\alpha_{k+1} = 1/2$ (the algorithm as stated, formally requires $\alpha_i$ values to be distinct, but we can take them to be arbitrarily close to each other).
    Thus if $(\mathcal T_k, \mathcal T_k)$ is the time-optimal algorithm for a fixed $k$, then all the tradeoffs obtained by fractalization will be present in our analysis; indeed, these correspond to the leftmost points of all horizontal plateaus in Figure \ref{fig:hp-levels}.
    By Lemma \ref{thm:fract}, for each $k$ these points will lie on a curve $2/\mathcal S^{c_k} = \mathcal T$ for some constant $c_k$.
    For $k = 6$, we can solve $2/1.816905^{c_k} = 1.816905$ and get $c_k \approx 0.161$.
    Hence, the time-space tradeoff is lower bounded by $2/\mathcal S^{0.161}$.
    Similarly we can also find that it is upper bounded by $2/\mathcal S^{0.099}$.
\end{proof}

\subsubsection{Computing the Optimal Values.}\label{sssec:hp-computation}

A central step in our analysis is the computation of $\mathcal T$ for all $\mathcal S\in[1,2]$ and $k=1,\dots,6$. The parameter $\mathcal S$ takes values in a continuous interval, necessitating a discrete approximation with a finite granularity. A naive approach that employs $k$ nested loops over the discrete values of $\mathcal S$ is computationally infeasible, as the precision of the results depends critically on a high-resolution discretization (requiring a granularity of at least $1000$ points for a good approximation). This would impose a prohibitive computational burden even on modern parallel architectures. To overcome this limitation, we employ a dynamic programming strategy that stores intermediate results to avoid redundant calculations. Our implementation for populating the dynamic programming table is provided in \cite{VC25}.

This approach revolves around two interacting DP tables: the \emph{optimal complexity} table $\mathtt{T}$ and the \emph{partial complexity} table $\mathtt{P}$. Since both the time complexity $\mathcal{T}^n$ and space complexity $\mathcal{S}^n$ are exponential in $n$, we simplify our analysis by focusing on their exponents. The entries in our dynamic programming tables $\mathtt{T}$ and $\mathtt{P}$ therefore store the base-$2$ logarithm of the complexity; that is, an entry contains the value $\tau(c) = \log_2 \mathcal T(c)$.
Specifically, an entry $\mathtt{T}[r,s]$ contains the optimal time complexity for determining whether a path exists in the hypercube using a recursion depth of at most $r$ and a memory budget parameterized by $s$, where $\mathcal S^n =2^{sn}$. Meanwhile, an entry $\mathtt{P}[r,s,\alpha_1,i,\alpha_i]$ contains the partial complexity for determining the existence of a path from $0^n$ to a vertex in layer $A_i$ (with parameter $\alpha_i$), a preprocessing layer $A_1$ with parameter $\alpha_1$, a recursion depth $r$, and a memory parameter $s$. The values of $\mathtt{P}$ are defined only for $r\geq 1$ and $i\geq 2$; the case $i=1$ is trivial ($\mathtt{P}[\cdot, \cdot, \cdot, 1, \cdot] = 0$) as the reachability of vertices in $A_1$ is known from the precomputation step.

We now formalize the base and recursive cases for filling these tables. For a fixed number of layer $k$, we distinguish the $(k+1)$-th layer by setting $\alpha_{k+1}=\frac{1}{2}$, as this layer always corresponds to the middle of the hypercube where the final Grover's search is performed.

\bigskip
\textbf{Base case $\mathtt{T}$ ($r=0$).} In the base case with no recursion that uses layers $A_i$, for all values of $s$, the algorithm resorts to a direct quantum search, yielding:
\begin{equation}\label{eq:bc-T}
    \mathtt{T}[0,s] = 1.
\end{equation}

\noindent This corresponds to the exponent of the $\widetilde\Oh(2^n)$-time, $\widetilde\Oh(1)$-space quantum algorithm.

\bigskip
\textbf{Recursive case $\mathtt{T}$ ($r>0$).} The optimal time complexity is obtained by minimizing over all valid choices of the first layer parameter $\alpha_1$, constrained by the available memory $s=\be(\lambda_c)$.

\begin{equation}
    \mathtt{T}[r,s] = \min_{\alpha_1 \in (0, \lambda_c]} \left\{ \max \left\{ \be(\alpha_1), \frac{1}{2} + \mathtt{P}[r, s, \alpha_1, k+1, 1/2] \right\} \right\}.
\end{equation}

\noindent This recurrence mirrors the structure of \Cref{eq:hp-k-gen-rc-e}, balancing the cost of classical preprocessing against the quantum recursive cost of verifying paths from $A_1$ to the middle layer.

\bigskip
\textbf{Base case $\mathtt{P}$ ($i=2)$.} The complexity of checking if there is a path to layer $A_2$ is computed as follows.
\begin{equation}\label{eq:bc-P-2}
\mathtt{P}[r, s, \alpha_1, 2, \alpha_2] = \frac{1}{2} \be\left( \frac{\alpha_1}{\alpha_2} \right) + (\alpha_2 - \alpha_1) \mathtt{T}\left[ r-1, \min\left\{ \frac{s}{\alpha_2 - \alpha_1}, 1 \right\} \right].
\end{equation}
\noindent The first term represents the exponent for Grover's search over layer $A_1$ to find a predecessor of a vertex in $A_2$. The second term is the recursive cost of verifying the path within the subcube of relative dimension $(\alpha_2 - \alpha_1)$ between the two layers.

\bigskip
\textbf{Recursive case $\mathtt{P}$ ($i\geq 3)$.} For general $i$, the cost of reaching layer $A_i$ from $A_1$ is computed by minimizing over all possible intermediate parameters $\alpha_{i-1} \in (\alpha_1, \alpha_i)$ of the preceding layer.
\begin{equation}\label{eq:rc-P-i}
 \renewcommand{\arraystretch}{1.5}
            \resizebox{\textwidth}{!}{$%
\begin{aligned}
\mathtt{P}[r, s, \alpha_1, i, \alpha_i] = \min_{\alpha_{i-1} \in (\alpha_1, \alpha_i)} \Bigg\{
& \frac{1}{2} \be\left( \frac{\alpha_{i-1}}{\alpha_i} \right) + \\
& \max \left\{ \mathtt{P}[r, s, \alpha_1, i-1, \alpha_{i-1}],\; (\alpha_i - \alpha_{i-1}) \mathtt{T}\left[r-1, \min\left\{ \frac{s}{\alpha_i - \alpha_{i-1}}, 1 \right\} \right] \right\}
\Bigg\}.
\end{aligned}$%
}
\end{equation}

\noindent The first term in the sum is the cost of Grover's search for a predecessor in $A_{i-1}$. The $\max$ operator ensures the total cost is dominated by the slower of two processes: the accumulated cost of reaching layer $A_{i-1}$ itself and the recursive cost of exploring the subcube between $A_{i-1}$ and $A_i$.

\bigskip
In summary, this recursive formulation of $\mathtt{T}$ and $\mathtt{P}$ provides a computationally tractable framework for exploring the time-space tradeoff. It avoids the infeasibility of a brute-force discretization over $s$  while enabling precise exploration of the time-memory tradeoff in the recursive quantum algorithm.

\subsection{Pairwise Scheme}\label{ssec:pws}

The pairwise scheme, originally introduced by \cite{KP10} in the classical setting, provides a general framework for solving permutation problems.
Here we show how we can improve it quantumly.

\paragraph{Classical Algorithm.}

First we describe how the pairwise scheme algorithm works to solve the \textsc{Hypercube Path} problem classically.
A valid path from $0^n$ to $1^n$ can be described by a permutation $\pi = (\pi_1, \ldots, \pi_n)$, meaning that the $i$-th edge connects two bit strings that differ in the $\pi_i$-th coordinate.
Let $k \in [0,n/2]$ be an integer parameter and for each pair of integers $\{2i-1,2i\}$, where $i \in [k]$, fix their relative order in this permutation.
Each of the $X = 2^k$ resulting partial orders defines a single subproblem, which  can be evaluated via dynamic programming as follows.

For a set $S \subseteq [n]$, let $1_S$ denote an $n$-bit string where $x_i = 1$ iff $i \in S$.
Then let $f(S) = 1$ iff $1_S$ is reachable from $0^n$ in $G$ and $0$ otherwise.
If a set $S \subseteq [n]$ satisfies the current partial order, we can compute $f(S)$ using
\begin{equation}
    f(S) = \lor_{i \in S} (f(S \setminus i) \land (1_{S \setminus i}, 1_S) \in G).
\end{equation}
If $Y$ is the total number of sets $S$ satisfying the partial order, then using dynamic programming we can compute $f(S)$ for all such $S$ in $\widetilde\Oh(Y)$ time and space (the sets can be listed in complexity $\widetilde\Oh(Y)$ using, for example, breadth-first search).

To calculate $Y$, note that for each pair $\{2i-1,2i\}$, only three of its subsets conform to the partial order: if the order of two elements $a$ and $b$ is fixed to $a \to b$, then the valid subsets are $\varnothing$, $\{a\}$ and $\{a, b\}$ ($\{b\}$ does not conform to the partial order, as $b$ cannot precede $a$).
For all other $n-2k$ elements, they can either appear in $S$ or not without restrictions.
Therefore, $Y = 3^k \cdot 2^{n-2k}$.

This yields an overall complexity of $\widetilde\Oh(XY) = \widetilde\Oh((3/2)^k2^n)$ time and $\widetilde\Oh(Y) = \widetilde\Oh((3/4)^k2^n)$ space, for any $k \in [n/2]$, producing a practical scheme that operates within exponential space $\widetilde\Oh(\mathcal S^n)$, where $\sqrt{3}\leq \mathcal S \leq 2$.  

\paragraph{Quantum Algorithm.}

We combine the pairwise scheme with a quantum algorithm for dynamic programming in $n$-dimensional lattice graphs from \cite{GKMV21} to obtain the following quantum time-space tradeoff:

\hpps*

\begin{proof}
    We will speed up two parts of the classical algorithm quantumly.
    First of all, we can run Grover's search over the subproblems, which requires only $\Oh(\sqrt{X}) = \Oh(\sqrt{2^k})$ Grover's iterations.

    Secondly, we will reduce the dynamic programming in a single subproblem to detecting a path in a \textit{multi-dimensional lattice}.
    Indeed, examine a graph on the set of vertices
    \begin{equation}
        V = \{0,1,2\}^{k} \times \{0, 1\}^{n-2k},
    \end{equation}
    in which the first $k$ coordinates correspond to the $k$ pairs with the fixed relative order and the other $n-2k$ coordinates correspond to elements $[2k+1,n]$.
    Denote the \textit{weight} of a vertex $x \in V$ by $|x| = \sum_{i=0}^{n-k} x_i$.
    Let a directed edge go from $x$ to $y$ in this graph iff there is a single $i$ such that $y_i = x_i+1$ and $y_i = x_i$ for all other $i$.

    Since we are interested in $f([n])$, then the classical dynamic programming over subsets can be reduced to detecting a path from $0^{n-k}$ to $2^k 1^{n-2k}$.
    Indeed, for the first $k$ coordinates, the values $0$, $1$ and $2$ denote the subsets $\varnothing$, $\{a\}$ and $\{a, b\}$.
    The other coordinates have values $0$ and $1$ if the corresponding element is or is not contained in the set.

    In \cite{GKMV21}, the authors proposed quantum algorithms for detecting such a path in a hyperlattice.
    In our case, we have a hyperlattice with $k$ dimensions of size $3$ and $n-2k$ dimensions of size $2$.
    For such a graph, their algorithm achieves time and space complexity
    \begin{equation}
        \widetilde\Oh(2.65907^k \cdot 1.82653^{n-2k}) = \widetilde\Oh(1.82653^n / 1.25465^k) = \widetilde\Oh(\mathcal S^n),
    \end{equation}
    see \texttt{solverD2K5.nb} in \cite{GKMV21data}.

    Our algorithm then requires the same amount of space, and its time complexity is
    \begin{equation}
        \widetilde\Oh(\sqrt{2^k} \cdot 1.82653^n / 1.25465^k) = \widetilde\Oh(1.82653^n \cdot 1.12718^k) = \widetilde\Oh(\mathcal T^n).
    \end{equation}

    We can calculate that the smallest amount of memory that this tradeoff can be applied to is $1.82653^n / 1.25465^{n/2} \approx 1.631^n$.
    When expressing $\mathcal T$ with $\mathcal S$, we obtain that $\mathcal T = 2.511/\mathcal S^{0.527}$.

    Finally, we can extend this tradeoff for all values of $\mathcal S$ using Lemma \ref{thm:fract}.
    We can calculate that the obtained tradeoff is lower and upper bounded by the functions $2/\mathcal S^{0.151}$ and $2/\mathcal S^{0.088}$, respectively.
\end{proof}

The graph of this tradeoff is shown in Figure \ref{fig:hp-short}.
While slightly less efficient than Theorem \ref{thm:hp-opt}, in this case, the algorithm is conceptually simpler and it also comes with an explicit upper bound on its time and space asymptotic complexity.

\section{Acknowledgements}

We thank an unrecognized audience member (but hopefully someday credited) at Joint Estonian-Latvian Theory Days 2018 for asking whether the results \cite{KP10} have relevance to the algorithms of \cite{ABIKPV19}, which started this research project.

J.V.~is supported by the 1.1.1.9 Research application No 1.1.1.9/LZP/2/25/206 of the Activity ``Post-doctoral Research'' ``Practical applications and limitations of quantum search''.

\printbibliography

\end{document}